\newcommand{\alg}{\mathbb{A}} % algebraics
\newcommand{\ra}{\mathbb{R}\cap\mathbb{A}} % real algebraics
\newcommand{\re}{\mathbb{R}} % reals
\newcommand{\cplx}{\mathbb{C}} % complex
\newcommand{\nat}{\mathbb{N}} % naturals
\newcommand{\zed}{\mathbb{Z}} % integers
\newcommand{\rat}{\mathbb{Q}} % rationals
\newcommand{\rats}{\rat} % rationals
\newcommand{\field}{\mathbb{F}} % arbitrary field F
\newcommand{\defn}{\stackrel{\mathrm{def}}{=}} % definition macro
\newcommand{\plh}{\mathcal} % polyhedron -- calligraphic
\newcommand{\lrs}[1]{\langle #1_n\rangle_{n=0}^{\infty}} % LRS with angle brackets
\newcommand{\vsp}{\plh} % vector space -- same as polyhedron
\newcommand{\matr}{\boldsymbol} % matrices -- boldface
\newcommand{\vct}{\boldsymbol} % vectors -- boldface
\newcommand{\cclass}[1]{\mathbf{#1}}
\newcommand{\np}{\cclass{NP}}
\newcommand{\rp}{\cclass{RP}}
\newcommand{\nprp}{\np^{\rp}}
\newcommand{\pspace}{\cclass{PSPACE}}
\newcommand{\ptime}{\cclass{PTIME}}
\newcommand{\eqslp}{\cclass{EqSLP}}
\newcommand{\npeqslp}{\np^{\eqslp}}
\newcommand{\corp}{\cclass{coRP}}
\newcommand{\bigoh}{\mathcal{O}}
\newcommand{\ein}[1]{2^{\bigoh(||#1||)}}
\newcommand{\pin}[1]{||#1||^{\bigoh(1)}}
\newcommand{\len}[1]{\Vert #1\Vert}
\newcommand{\eq}{\mathit{eq}}
\newcommand{\Eq}{\mathit{Eq}}
\newcommand{\cls}{}
\newcommand{\mul}{\mathit{mul}}
\newcommand{\spa}{\mathit{span}}
\newcommand{\orbitcoeff}{\kappa}
\newcommand{\orbitphi}{c}
\begin{document}

\setcopyright{acmlicensed}
\issn{0004-5411/2016} 
\acmYear{2016} 
\acmMonth{0} 
\acmArticle{0}
\doi{http://dx.doi.org/10.1145/2857050}

\markboth{V. Chonev et al.}{On the Complexity of the Orbit Problem}

\title{On the Complexity of the Orbit Problem}
\author{VENTSISLAV CHONEV
\affil{Institute of Science and Technology Austria}
JO{\"E}L OUAKNINE
\affil{University of Oxford}
JAMES WORRELL
\affil{University of Oxford}}

\begin{abstract}

We consider higher-dimensional versions of Kannan and Lipton's
Orbit Problem---determining whether a target vector space $\vsp{V}$ may
be reached from a starting point $\vct{x}$ under repeated applications
of a linear transformation $\matr{A}$. Answering two questions posed by
Kannan and Lipton in the 1980s, we show that when $\vsp{V}$ has dimension
one, this problem is solvable in polynomial time, and when $\vsp{V}$ has 
dimension two or three, the problem is in $\nprp$.

\end{abstract}

\category{F.2.1}{Analysis of Algorithms 
and Problem Complexity}{Numerical Algorithms and Problems --
Computations on matrices, Number-theoretic computations}
\category{G.2.1}{Discrete Mathematics}{Combinatorics -- Recurrences 
and difference equations}

\terms{Algorithms, Theory, Verification}

\keywords{Linear transformations, matrix orbits, linear recurrence
sequences, Skolem's Problem, termination of linear programs}

\acmformat{Ventsislav Chonev, Jo{\"e}l Ouaknine, and James Worrell, 2014.
On the Complexity of the Orbit Problem.}

\begin{bottomstuff}
Authors' addresses: Ventsislav Chonev, IST Austria, Am Campus 1,
3400 Klosterneuburg, Austria\\
\{Jo{\"e}l Ouaknine, James Worrell\}, 
Department of Computer Science, University of Oxford,
Wolfson Building, Parks Road, Oxford OX1 3QD, UK
\end{bottomstuff}

\maketitle

\section{Introduction}

The \emph{Orbit Problem} was introduced by Harrison in 
\cite{Har69} as a formulation of the reachability problem
for linear sequential machines. The problem is stated as follows:
\begin{quotation}\noindent
Given a square matrix $\matr{A}\in\rat^{m\times m}$ and vectors
$\vct{x},\vct{y}\in\rat^m$, decide whether there exists a non-negative
integer $n$ such that $\matr{A}^n\vct{x}=\vct{y}$.
\end{quotation}

The decidability of this problem remained open for over ten years, 
until it was shown to be decidable in polynomial time by Kannan and
Lipton \cite{KL80}. In the conclusion of the journal version of 
their work \cite{orbit}, the authors discuss a higher-dimensional
extension of the Orbit Problem, as follows:
\begin{quotation}\noindent
Given a square matrix $\matr{A}\in\rat^{m\times m}$, a vector $\vct{x}\in
\rat^m$, and a subspace $\vsp{V}$ of $\rat^m$, decide whether
there exists a non-negative integer $n$ such that $\matr{A}^n\vct{x}\in\vsp{V}$.
\end{quotation}

As Kannan and Lipton point out, the higher-dimensional Orbit Problem
is closely related to the \emph{Skolem Problem}: given a square matrix
$\matr{A}\in\rat^{m\times m}$ and vectors $\vct{x}, \vct{y}\in\rat^m$,
decide whether there exists a non-negative integer $n$ such that
$\vct{y^T}\matr{A}^n\vct{x}=0$.  Indeed, the Skolem Problem is the
special case of the higher-dimensional Orbit Problem in which the
target space $\vsp{V}$ has dimension $m-1$. The sequence of numbers
$\lrs{u}$ given by $u_n = \vct{y^T}\matr{A}^n\vct{x}$ is a linear
recurrence sequence. A well-known result, the Skolem-Mahler-Lech
Theorem, states that the set $\{n:u_n=0\}$ of zeros of any linear
recurrence is the union of a finite set and finitely many arithmetic
progressions \cite{mahlerSMLthm,lechSMLthm,skolemSMLthm,hanselSMLthm}.
Moreover, it is known how to compute effectively the arithmetic
progressions in question \cite{BerstelMignotte76}. The main difficulty
in deciding the Skolem Problem is thus to determine whether
the finite component of the set of zeros is empty.

The decidability of the Skolem Problem has been open for many decades
\cite{TUCS05,Tao08}, and it is therefore unsurprising that there has
been virtually no progress on the higher-dimensional Orbit Problem
since its introduction in \cite{orbit}. In fact, decidability of the
Skolem Problem for matrices of dimension three and four
\cite{mignotte,vereshchagin} was only established slightly prior to
the publication of \cite{orbit}, and there has been no substantial
progress on this front since.\footnote{A proof of decidability of the
  Skolem Problem for linear recurrence sequences of order five was
  announced in \cite{TUCS05}.  However, as pointed out in \cite{rp},
  the proof seems to have a serious gap.} In terms of lower bounds,
the strongest known result for the Skolem Problem is $\np$-hardness
\cite{blondel}, which therefore carries over to the unrestricted
version of the higher-dimensional Orbit Problem.

Kannan and Lipton speculated in \cite{orbit} that for target spaces of
dimension one the Orbit Problem might be solvable, ``hopefully with a 
polynomial-time bound''. They moreover observed that the cases in which
the target space $\vsp{V}$ has dimension two or three seem ``harder'', and proposed
this line of research as an approach towards the Skolem Problem. In spite
of this, to the best of our knowledge, no progress has been recorded on the
higher-order Orbit Problem in the intervening two-and-a-half decades.

Our main result is the following. We show that the higher-dimensional
Orbit Problem is in $\ptime$ if the target space has dimension one,
and in $\nprp$ if the target space has dimension two or three.  While
we make extensive use of the techniques of
\cite{mignotte,vereshchagin} on the Skolem Problem, our results, in
contrast, are independent of the dimension of the matrix
$\matr{A}$. 

Strictly speaking, Kannan and Lipton's original work on the Orbit
Problem concerned the case that the target was an affine subspace of
dimension $0$.  Our main results entail an $\nprp$ complexity bound in
case the target is an affine subspace of dimension $1$ or $2$, simply
by embedding into a vector-space problem one dimension higher.

The following example illustrates some of the phenomena that emerge in the
Orbit Problem for two-dimensional target spaces. Consider the following
matrix and initial vector:
\[
\matr{A} = 
\left[\begin{array}{rrrrrrr}
4 & & 6 && 14 && 21 \\
-8 & & -2 && -28 && -7 \\
-2 & & -3 && -6 && -9 \\
4 & & 1 && 12 && 3 \\
\end{array}\right]
\;\;\;
\vct{x} = \left[\begin{array}{r}
28 \\ -14 \\ -10 \\ 5
\end{array}\right]
\]
Then with target space
\[ \vsp{V} = \{(u_1,u_2,u_3,u_4)\in\rat^4:4u_1+7u_3=0,4u_2+7u_4=0\} \]
it can be shown that $\matr{A}^n\vct{x}\in\vsp{V}$ if and only if $n$ has residue
2 modulo 6. Such periodic behaviour can be analysed in terms of the 
eigenvalues of the matrix $\matr{A}$. These are $\lambda\omega$, $\overline{\lambda}
\omega$, $\lambda\overline{\omega}$ and $\overline{\lambda\omega}$, where
$\omega=e^{\pi i/3}$ is a primitive 6-th root of unity and $\lambda =
(-1+i\sqrt{39})/2$. The key observation is that the eigenvalues of $\matr{A}$ 
fall into only two classes under the equivalence relation $\sim$, defined
by $\alpha\sim\beta$ if and only if $\alpha/\beta$ is a root of unity. 

We handle such instances by analysing the equivalence classes of $\sim$.  We
show that, provided $\sim$ has sufficiently many equivalence classes, there is
at most one exponent $n$ such that $\matr{A}^n\vct{x}\in\vsp{V}$. Computable bounds on such an
$n$ are obtained utilising the work of \cite{mignotte,vereshchagin},
quantifying and strengthening some of the bounds given for the Skolem Problem. In
the case of a one-dimensional target subspace $\vsp{V}$, the resulting bound is
polynomial in the size of the problem representation, allowing for all
exponents $n$ up to the bound to be checked directly and yielding a
polynomial-time algorithm.  Unfortunately, when $\vsp{V}$ has dimension two or three,
the bounds on $n$ are exponential in the size of the input, leading to an
$\nprp$ \emph{guess-and-check} procedure, in which an
$\rp$ oracle is used to check whether $\matr{A}^n\vct{x}\in\vsp{V}$ for a guessed value
of $n$. Finally, the case in which the eigenvalues of $\matr{A}$ have fewer
equivalence classes under $\sim$ is handled explicitly using a case analysis on
the residue of $n$ modulo the least common multiple of the orders of all ratios
of eigenvalues which are roots of unity. For each such residue class, we show
how to determine whether it contains exponents $n$ for which $\matr{A}^n\vct{x}\in\vsp{V}$.
Noting that there are at most exponentially many such residue classes, we can
directly incorporate this case analysis into an $\nprp$ algorithm.

\subsection{Related Work}

Aside from its connection to the Skolem Problem, the
higher-dimensional Orbit Problem is closely related to termination
problems for linear programs (see, e.g., \cite{benamram,Bra06,tiwari})
and to reachability questions for discrete linear dynamical systems
(cf. \cite{TUCS05}). Another related problem is the \emph{Polyhedron
  Hitting Problem}, which replaces the target space with an
intersection of half-spaces. In \cite{TV11}, the Polyhedron Hitting
Problem is related to decision problems in formal language theory.
Some partial decidability results for this problem are given
in~\cite{COWsodaPHP}. Let us also mention the more recent work of
Arvind and Vijayaraghavan \cite{gapl} which places the original Orbit
Problem in the logspace counting hierarchy $\cclass{GapLH}$.

Another generalisation of the Orbit Problem was considered in \cite{cai} and
shown to be decidable in polynomial time. This asks, given commuting rational
matrices $\matr{A}$, $\matr{B}$, and $\matr{C}$, whether there exist integers 
$i$ and $j$ such that $\matr{A}^i\matr{B}^j=\matr{C}$.

A continuous version of the Orbit Problem is considered in \cite{hainry}. Here
one studies linear differential equations of the form $\vct{x'}(t)=\matr{A}\vct{x}(t)$ for a
rational matrix $\matr{A}$. The problem is to decide, for a given initial condition
$\vct{x}(0)$ and target vector $\vct{v}$, whether there exists $t$ such that $\vct{x}(t)=\vct{v}$. 
The main result of \cite{hainry} shows decidability of this problem.

\section{Paper outline}\label{sec: orbit.outline}

This work is based on our conference paper~\cite{COW13}.
The main technical results are the following theorems:
\begin{theorem}\label{thm: orbit.bounds}
Suppose we are given an instance of the Orbit Problem, comprising a square 
matrix $\matr{A}\in\rat^{m\times m}$, a vector $\vct{x}\in\rat^m$ and a subspace
$\vsp{V}\subseteq\rat^m$ with $\dim(\vsp{V})\leq 3$. Let $\len{I}$ be the length of the description 
of the input data. There exists a bound $N=\ein{I}$ such 
that if the instance is positive, then there exists a witness (that is, $n\in\nat$ with 
$\matr{A}^n\vct{x}\in\vsp{V}$) such that $n<N$.
\end{theorem}

\begin{theorem}\label{thm: orbit.complexity}
The Orbit Problem with $\dim(\vsp{V})\leq3$ is in $\nprp$. Further, if 
$\dim(\vsp{V})=1$, then the problem is in $\ptime$.
\end{theorem}
In this section we give a high-level overview of the argument.

Firstly, we must emphasise that the fixed-dimensional versions 
of the Orbit Problem referred to by Theorems~\ref{thm: orbit.bounds} and~\ref{thm: orbit.complexity}
are closely related to the  Skolem Problem for linear recurrence sequences of order at most four. 
In the interest of clarity, we have confined our treatment of the  Skolem Problem 
to the Appendix. We employ two powerful tools from transcendence theory, due to 
Baker-W{\"u}stholz and van der Poorten, as well as standard results from algebraic 
number theory, to prove our main result on the  Skolem Problem, 
Theorem~\ref{thm: skolem.bounds}, which shows the existence of effective bounds on 
the zeros of LRS of order at most four and upon which our results on the Orbit Problem
build.

We now outline the structure of the argument which establishes Theorems~\ref{thm: orbit.bounds}
and~\ref{thm: orbit.complexity}. The first step is a reduction to a similar problem, a 
polynomial version of the \emph{matrix power problem:} 
given a rational square matrix $\matr{A}$ and polynomials
$P_1,\dots,P_d\in\rat[x]$ such that $P_1(\matr{A}),\dots,P_d(\matr{A})$ 
are linearly independent over $\rat$, determine whether there exists $n$ such that 
$\matr{A}^n$ lies in the $\rat$-vector space 
$\spa\{P_1(\matr{A}),\dots,P_d(\matr{A})\}$. The reduction does not
increase the dimension of
the target space, so we will always have $d\leq 3$. The reduction 
can be carried out in polynomial time
and rests entirely on standard techniques from linear algebra.

For the second step, we construct a \emph{Master System.} This is a system of
equations, based on the eigenvalues of $\matr{A}$ and the polynomials $P_1,\dots,P_d$.
It has $d+1$ unknowns: the exponent $n$ and the coefficients 
$\orbitcoeff_1,\dots,\orbitcoeff_d$ which witness the membership of $\matr{A}^n$
in $\spa\{P_1(\matr{A}),\dots,P_d(\matr{A}))\}$. The solutions 
$(n,\orbitcoeff_1,\dots,\orbitcoeff_d)$ of the Master System will be exactly the solutions
of the matrix equation 
$\matr{A}^n=\orbitcoeff_1P_1(\matr{A})+\dots+\orbitcoeff_dP_d(\matr{A})$.
The domain of $n$ is $\nat$ throughout. Since the input data is
rational, any solution $(n,\orbitcoeff_1,\dots,\orbitcoeff_d)$ of the Master System
will necessarily have $\orbitcoeff_1,\dots,\orbitcoeff_d\in\rat$.

Next, in Section \ref{secOnedim}, we give a polynomial-time decision procedure to 
determine whether the Master System for an instance with a one-dimensional
target space has a solution. The algorithm explicitly manipulates the equations
in the system, preserving the set of solutions at every step, to determine the existence
of a solution in polynomial time, settling the one-dimensional case of Theorem
\ref{thm: orbit.complexity}. 
The section rests critically on Theorem \ref{thm: skolem.bounds}
for non-degenerate linear recurrence sequences of order $2$,
which allows us to bound the exponent
in all cases when $\matr{A}$ has two eigenvalues whose ratio is not a root of unity.
In all other situations, the given Orbit instance essentially reduces to a 
system of linear congruences, easily solved using the Chinese Remainder Theorem.
The solution method yields the full set of witness exponents $n$ when
this set is finite, or a description of the witness set as an arithmetic progression when 
it is infinite. Thus, if the problem instance is positive, a witness 
exponent which is at most exponentially large is automatically guaranteed to 
exist, as promised by Theorem \ref{thm: orbit.bounds}, by virtue of our ability 
to write it down using polynomially many bits.

An important concept for the cases of two- and three-dimensional target spaces
is the notion of \emph{degeneracy}.
An instance $(\matr{A},\vct{x},\vsp{V})$ of the Orbit Problem is defined as 
degenerate if there exist two distinct eigenvalues of $\matr{A}$ whose
quotient is a root of unity, otherwise the instance is non-degenerate.
In general, it is possible to reduce an arbitrary Orbit Problem instance to a set
of non-degenerate instances.
Let $L$ be the least common multiple of the orders of all quotients of eigenvalues
of $\matr{A}$ which are roots of unity. For each $j\in\{ 0,\dots,L-1\}$, consider
separately the problem of deciding whether there exists $n\in\nat$ such
that $(\matr{A}^L)^{n}\left(\matr{A}^j\vct{x}\right)\in\vsp{V}$. 
These instances are all non-degenerate,\footnote{Indeed, the eigenvalues of $\matr{A}^L$ are 
exactly $\lambda_i^L$ where $\lambda_i$ are the eigenvalues of
$\matr{A}$. If for any two distinct such eigenvalues, say $\lambda_i^L\neq
\lambda_j^L$, we have $\left(\lambda_i^L/ \lambda_j^L\right)^t=1$, then
$\lambda_i/\lambda_j$ must also be a root of unity. Then by the definition of
$L$, $\lambda_i^L/\lambda_j^L=1$, which gives the contradiction
$\lambda_i^L=\lambda_j^L$.} and the original problem instance is positive if and 
only if at least one of these $L$ non-degenerate instances is positive. 
Unfortunately, this reduction to the non-degenerate case carries an exponential
overhead, as $L$ is, in general, exponentially large in the size of the input data.

Instead, we adopt the following strategy for solving the Orbit Problem 
for possibly degenerate instances.  
Assume that as part of the input, we are given the residue $r=n\bmod L$. 
Thus, we are interested in determining whether the Master System has a solution
$(n,\orbitcoeff_1,\dots,\orbitcoeff_d)$ with exponent $n$ such that $r=n\bmod L$.
We will prove that for any $r$, there exists a bound $N_r$ such that 
if there exists such an exponent with residue $r$, then one exists which does
not exceed the bound $N_r$. Furthermore, $N_r=\ein{I'}$, where 
$\len{I'}=\len{I}+\len{r}$ is the length of the input augmented with 
the binary representation of $r$. This is clearly sufficient to prove 
Theorem \ref{thm: orbit.bounds}: simply take $N=\max\{N_r:r\in\{0,\dots,L-1\}\}$.
The case analysis on $r$ simplifies the Master System considerably,
effectively eliminating degeneracy as a concern, and allowing us to derive
the existence of $N_r$ using our results on the  Skolem Problem
for LRS of order $3$ and $4$.
For each fixed $r$, algebraic manipulation yields either a `small' witness 
$n$ of the correct residue, or a non-degenerate linear recurrence sequence $\lrs{u}$
of low order
such that if the Master System has a solution with exponent $n$ with the desired
residue $r$, then $u_n=0$. The description of this linear
recurrence sequence is computable in polynomial time from the input instance
and $r$. Since $\len{r}=\pin{I}$, it follows that the length of the description
of $\lrs{u}$ is $\len{u}=\pin{I'}=\pin{I}$, so by Theorem \ref{thm: skolem.bounds}, 
the desired bound $N_r$ exists and $N_r=\ein{I}$.

We must emphasise that this algebraic manipulation of the Master
System and the calculation of the description of $\lrs{u}$ is not part
of the decision procedure for the Orbit Problem. Rather its purpose is
to prove the existence of the desired bounds $N_r$, and hence of
$N$. We make use of the observation that this manipulation can, in
principle, be carried out in polynomial time, to conclude
$N_r=\ein{I'}$ and $N=\ein{I}$, and hence establish Theorem \ref{thm:
  orbit.bounds}.

Given the bound $N$ of Theorem \ref{thm: orbit.bounds}, we employ
a \emph{guess-and-check} procedure to obtain the complexity upper
bounds of Theorem \ref{thm: orbit.complexity}.
Since $N$ is at most exponentially large in the size of the input, 
an $\np$ procedure can guess an exponent $n$ such that $n<N$.
Then we compute $\matr{A}^n\vct{x}$
by iterated squaring, thereby using polynomially many arithmetic operations.
Moreover, all integers that occur in this algorithm have a polynomial-sized
representation via arithmetic circuits. 
Now, to verify $\matr{A}^n\vct{x}\in\vsp{V}$, we compute the determinant
of $\matr{B^T}\matr{B}$, where $\matr{B}$ is the matrix whose columns are 
$\matr{A}^n\vct{x}$ and the basis vectors specifying $\vsp{V}$, also
as an arithmetic circuit.
Clearly, $n$ is a witness to the problem instance if and only if this determinant is zero.
This is easy to determine with an $\eqslp$ oracle, so we have membership in 
$\npeqslp$. It is known that $\eqslp\subseteq\corp$ \cite{schonhage}, so we
have membership in $\nprp$, thereby establishing Theorem \ref{thm: orbit.complexity}.

\section{Reduction}\label{secReduction}
\subsection{Matrix power problem}

Suppose we are given a matrix $\matr{A}\in\rat^{m\times m}$, a vector $\vct{x}\in\rat^m$ and a target
vector space $\vsp{V}\subseteq \rat^m$ specified by a basis of rational vectors $\vct{y_1},\dots,\vct{y_k}$. We
wish to decide whether there exists $n\in\nat$ such that $\matr{A}^n\vct{x}\in \vsp{V}$. 

Observe that we can rescale $\matr{A}$ in polynomial time by the least common multiple
of all denominators appearing in $\matr{A}$. This reduces the general problem to the
sub-problem in which $\matr{A}$ is an integer matrix.
 
Let
$\nu=\max\{ m\:|\: \vct{x},\matr{A}\vct{x},\dots,\matr{A}^m\vct{x}
\mbox{ are linearly independent}\}$, 
$B=\{ \vct{x},\matr{A}\vct{x},\dots,\matr{A}^\nu \vct{x}\}$, 
$\vsp{U}=\spa(B)$ and
$\matr{D}=\left[\begin{array}{cccc} \vct{x} & \matr{A}\vct{x} & \dots 
& \matr{A}^\nu \vct{x}\end{array}\right]$.  It
is clear that $\vsp{U}$ is invariant under the linear transformation $\matr{A}$, so consider
the restriction of $\matr{A}$ to $\vsp{U}$. Suppose $\vct{b}=(b_0,\dots, b_\nu)^T$ are the
coordinates of $\matr{A}^{\nu+1}\vct{x}$ with respect to ${B}$, that is,
$\matr{A}^{\nu+1}\vct{x}=\matr{D}\vct{b}$. 
The restriction of $\matr{A}$ to $\vsp{U}$ with respect to the basis $B$ is described by the matrix 
\[
\matr{M}=\left[\begin{array}{ccccc}
0 & 0 & \dots & 0 & b_0\\
1 & 0 & \dots & 0 & b_1\\
0 & 1 & \dots & 0 & b_2\\
0 & 0 & \ddots & 0 & \vdots\\
0 & 0 & \dots & 1 & b_\nu
\end{array}\right].
\]
It is easy to check that $\matr{D}\matr{M}=\matr{A}\matr{D}$. 
Thus, if some vector $\vct{z}$ has coordinates
$\vct{z'}$ with respect to $B$, so that $\vct{z}=\matr{D}\vct{z'}$, 
then $\matr{A}\vct{z}$ has coordinates $\matr{M}\vct{z'}$
with respect to $B$, so that $\matr{A}\vct{z}=\matr{D}\matr{M}\vct{z'}$. By induction, for all
$n\in\nat$, $\matr{A}^n\vct{x}=\matr{D}\matr{M}^n\vct{x'}$, 
where $\vct{x'}=(1,0,\dots,0)^T$. Next we calculate
a basis $\{\vct{w_1},\dots,\vct{w_t}\}$
for $\vsp{W}\defn\vsp{U}\cap \vsp{V}$.  Then, if $\vct{w_i}$ are such that
$\vct{w_i}=\matr{D}\vct{w_i'}$ for all $i$, we have 
\[ \matr{A}^n\vct{x}\in \vsp{V} \iff \matr{A}^n\vct{x}\in \vsp{W} 
\iff \matr{M}^n\vct{x'}\in \spa\{\vct{w_1'},\dots,\vct{w_t'}\}. \]
Notice that the matrix $\matr{M}$ describes a restriction of the linear transformation denoted by
$\matr{A}$, so its eigenvalues are a subset of the eigenvalues of $\matr{A}$. In particular,
since $\matr{A}$ was rescaled to an integer matrix, the eigenvalues of $\matr{M}$ are
algebraic integers as well.

Define the matrices $\matr{T_1},\dots,\matr{T_t}$ by 
\[
\matr{T_i}=\left[\begin{array}{cccc}
\vct{w_i'} & \matr{M}\vct{w_i'} & \dots & \matr{M}^\nu \vct{w_i'}\end{array}\right].
\]
We will show that $\matr{M}^n\vct{x'}\in \spa\{\vct{w_1'},\dots,\vct{w_t'}\}$ 
if and only if $\matr{M}^n\in
\spa\{\matr{T_1}, \dots,\matr{T_t}\}$. If for some coefficients $\orbitcoeff_i$ we have 
\[
\matr{M}^n=\sum_{i=0}^t\orbitcoeff_i\matr{T_i},
\]
then considering the first column of both sides, we have 
\[
\matr{M}^n\vct{x'}=\sum_{i=0}^t\orbitcoeff_i\vct{w_i'}.
\]
Conversely, suppose $\matr{M}^n\vct{x'}=\sum_{i=0}^t\orbitcoeff_i\vct{w_i'}$. 
Then note that $\vct{x'},\matr{M}\vct{x'},
\dots,\matr{M}^\nu\vct{x'}$ are just the unit vectors of size $\nu+1$. Multiplying by $\matr{M}^j$
for $j=0,\dots,\nu$ gives $\matr{M}^{n+j}\vct{x'}=\sum_{i=0}^t\orbitcoeff_i\matr{M}^j\vct{w_i'}$. 
The left-hand side is exactly the $(j+1)$-th column of $\matr{M}^n$, whereas 
$\matr{M}^j\vct{w_i'}$ on the
right-hand side is exactly the $(j+1)$-th column of $\matr{T_i}$. So we have
$\matr{M}^n=\sum_{i=0}^t\orbitcoeff_i\matr{T_i}$.

Thus, we have reduced the Orbit Problem to the \emph{Matrix Power
  Problem:} determining whether some power of a given matrix lies
inside a given vector space of matrices. Now we will perform a further
reduction step. It is clear that within the space
$\vsp{T}\defn\spa\left\{ \vct{T_1},\dots,\vct{T_t}\right\}$ it
suffices to consider only matrices of the shape $P(\matr{M})$ where
$P\in\rat[x]$. We find a basis for the space
$\vsp{P}\defn\left\{ P(\matr{M})\:|\: P\in\rats[x]\right\} $ and then
a basis $\left\{ P_1(\matr{M}),\dots,P_s(\matr{M})\right\}$ for
$\vsp{P}\cap \vsp{T}$.  Then
$\matr{M}^n\in \vsp{T}\iff \matr{M}^n\in \vsp{P}\cap \vsp{T}$. We call
the problem of determining, given $\matr{M}$ and $P_1,\dots,P_s$,
whether there exists $n\in\nat$ such that
$\matr{M}^n\in\spa\{P_1(\matr{M}),\dots,P_s(\matr{M})\}$, the
\emph{polynomial version} of the matrix power problem. Observe that
$\dim(\vsp{V})\geq\dim(\vsp{T})\geq\dim(\vsp{T}\cap \vsp{P})$, so the
dimension of the target vector space does not grow during the
described reductions.  All described operations may be performed in
polynomial time using standard techniques from linear algebra.

\subsection{Master System of equations}

Suppose now we have an instance $(\matr{A},P_1,\dots,P_s)$ of the polynomial version
of the matrix power problem. Calculate the minimal polynomial of $\matr{A}$
and obtain canonical representations of its roots $\alpha_1,\dots,\alpha_k$,
that is, the eigenvalues of $\matr{A}$. This may be done in polynomial time, see
Section \ref{appAlgnum}. Throughout, for an eigenvalue $\alpha_i$
we will denote by $\mul(\alpha_i)$ the multiplicity of $\alpha_i$ as a
root of the minimal polynomial of the matrix.

Fix an exponent $n\in\nat$ and coefficients $\orbitcoeff_1,\dots,\orbitcoeff_s\in\cplx$
and define the polynomials $P(x)=\sum_{i=1}^s\orbitcoeff_iP_i(x)$ and $Q(x)=x^n$. It is
easy to see that
\[ Q(\matr{A})=P(\matr{A}) \]
if and only if
\begin{equation}
\forall i\in\{1,\dots,k\}.
\forall j\in\{0,\dots,\mul(\alpha_i)-1\}.
P^{(j)}(\alpha_i)=Q^{(j)}(\alpha_i).
\label{eq: system of equations}
\end{equation}
Indeed, $P-Q$ is zero at $\matr{A}$ if and only if the minimal polynomial of $\matr{A}$
divides $P-Q$, that is, each $\alpha_i$ is a root of $P-Q$ with multiplicity at least
$\mul(\alpha_i)$, or equivalently, each $\alpha_i$ is a root of $P-Q$ and its 
first $\mul(\alpha_i)-1$ derivatives.

Thus, in order to decide whether there exists an exponent $n$ and coefficients
$\orbitcoeff_i$ such that $\matr{A}^n=\sum_{i=1}^s\orbitcoeff_iP_i(\matr{A})$, 
it is sufficient to solve the system
of equations (\ref{eq: system of equations}) where the unknowns are
$n\in\nat$ and $\orbitcoeff_1,\dots,\orbitcoeff_s\in\cplx$.  Each eigenvalue $\alpha_i$
contributes $\mul(\alpha_i)$ equations which specify that $P(x)$ and
its first $\mul(\alpha_i)-1$ derivatives all vanish at $\alpha_i$. 

For brevity in what follows, we will denote by $\eq(\alpha_i,j)$ the
$j$-th derivative equation contributed to the system by $\alpha_i$, that is,
$P^{(j)}(\alpha_i)=Q^{(j)}(\alpha_i)$.  This notation is defined only for
$0\leq j<\mul(\alpha_i)$.  We will also denote by
$\Eq(\alpha_i)$ the set of equations contributed by $\alpha_i$ to the
system: 
\[
\Eq(\alpha_i)=\left\{ \eq(\alpha_i,0),\dots,
\eq(\alpha_i,\mul(\alpha_i)-1)\right\}.
\]

For example, if the minimal polynomial of $\matr{A}$ has roots 
$\alpha_1$, $\alpha_2$, $\alpha_3$ with
multiplicities $\mul(\alpha_i)=i$ and the target space is $\spa\left\{
P_1(\matr{A}),P_2(\matr{A})\right\}$ then the system contains six equations:
\begin{align*}
\alpha_1^n & = \orbitcoeff_1P_1(\alpha_1)+\orbitcoeff_2P_2(\alpha_1)\\
\alpha_2^n & = \orbitcoeff_1P_1(\alpha_2)+\orbitcoeff_2P_2(\alpha_2)\\
n\alpha_2^{n-1} & = \orbitcoeff_1P_1'(\alpha_2)+\orbitcoeff_2P_2'(\alpha_2)\\
\alpha_3^n & = \orbitcoeff_1P_1(\alpha_3)+\orbitcoeff_2P_2(\alpha_3)\\
n\alpha_3^{n-1} & = \orbitcoeff_1P_1'(\alpha_3)+\orbitcoeff_2P_2'(\alpha_3)\\
n(n-1)\alpha_3^{n-2} & =\orbitcoeff_1P_1''(\alpha_3)+\orbitcoeff_2P_2''(\alpha_3)
\end{align*}
Then $\eq(\alpha_3,0)$ is the equation 
\[
\alpha_3^n=\orbitcoeff_1P_1(\alpha_3)+\orbitcoeff_2P_2(\alpha_3)
\]
and $\Eq(\alpha_2)$ is the two equations
\begin{align*}
\alpha_2^n & = \orbitcoeff_1P_1(\alpha_2)+\orbitcoeff_2P_2(\alpha_2)\\
n\alpha_2^{n-1} & = \orbitcoeff_1P_1'(\alpha_2)+\orbitcoeff_2P_2'(\alpha_2)
\end{align*}

\section{One-dimensional target space}
\label{secOnedim}

Suppose we are given a one-dimensional matrix power problem instance $(\matr{A},P)$
and wish to decide whether $\matr{A}^n\in \spa\{P(\matr{A})\}$ for some $n$. We have
constructed a system of equations in the exponent $n$ and the coefficient $\orbitcoeff$
as in (\ref{eq: system of equations}). For example, if the roots of 
the minimal polynomial of $\matr{A}$
are $\alpha_1,\alpha_2,\alpha_3$ with multiplicities $\mul(\alpha_j)=j$, 
the system is:
\begin{align*}
\alpha_1^n & =  \orbitcoeff P(\alpha_1)\\
\alpha_2^n & =  \orbitcoeff P(\alpha_2)\\
n\alpha_2^{n-1} & =  \orbitcoeff P'(\alpha_2)\\
\alpha_3^n & =  \orbitcoeff P(\alpha_3)\\
n\alpha_3^{n-1} & =  \orbitcoeff P'(\alpha_3)\\
n(n-1)\alpha_3^{n-2} & =  \orbitcoeff P''(\alpha_3)
\end{align*}
In this section we will describe how such systems may be solved in polynomial
time. First, we perform some preliminary calculations.

\begin{enumerate}
\item We check whether $\orbitcoeff=0$ has a corresponding $n$ which solves the matrix
equation $\matr{A}^n=\orbitcoeff P(\matr{A})$, that is, whether $\matr{A}$ is nilpotent.
Otherwise, assume $\orbitcoeff\neq0$.
\item Let $k=\max_j\{\mul(\alpha_j)\}$.  We check for all $n<k$ whether
$\matr{A}^n$ is a multiple of $P(\matr{A})$. If so, we are done. Otherwise, assume $n\geq
k$.
\item We check whether $\alpha_j=0$ for some $j$. If so, then all of the
equations $\Eq(\alpha_i)$ are of the form $0=\orbitcoeff P^{(t)}(0)$, which is
equivalent to $0=P^{(t)}(0)$.  We can easily check whether these equations are
satisfied. If so, we dismiss them from the system without changing the set of
solutions.  If not, then there is no solution and we are done. Now we assume
$\alpha_j\neq0$ for all $j$.
\item Finally, we check whether the right-hand side $\orbitcoeff P^{(t)}(\alpha_j)$ of
some equation is equal to $0$, by dividing $P^{(t)}(x)$ by the minimal polynomial of $\alpha_j$.
If this is the case, then the problem instance is negative, because the left-hand 
sides are all non-zero.
\end{enumerate}
Let $\eq(\alpha_i,k)/\eq(\alpha_j,t)$ denote the equation obtained from 
$\eq(\alpha_i,k)$ and $\eq(\alpha_j,t)$ by asserting that the ratio
of the left-hand sides equals the ratio of the right-hand sides, that is,
\[
\frac{n(n-1)\dots(n-k+1)\alpha_i^{n-k}}{n(n-1)\dots(n-t+1)\alpha_j^{n-t}}=
\frac{P^{(k)}(\alpha_i)}{P^{(t)}(\alpha_j)}.
\]
We compute representations of all quotients $\alpha_i/\alpha_j$, and consider
three cases.

\emph{Case I.} Some quotient $\alpha_i/\alpha_j$ is not a root
of unity. Then $\eq(\alpha_i,0)$ and $\eq(\alpha_j,0)$
together imply $\eq(\alpha_i,0)/\eq(\alpha_j,0)$,
that is,
\[ \left(\frac{\alpha_i}{\alpha_j}\right)^n=\frac{P(\alpha_i)}{P(\alpha_j)}. \]

In Section \ref{appAlgnum}, we discuss the efficient representation and
manipulation of algebraic numbers. By Lemma \ref{lem:operations on algebraic
numbers}, we can compute representations of $P(\alpha_i)/P(\alpha_j)$ and
$\alpha_i/\alpha_j$ in polynomial time. Then by Lemma \ref{lem: algebraic
number power problem} in Section \ref{appSkolem2}, $n$ is bounded by a
polynomial in the input. We check $\matr{A}^n\in \spa\{P(\matr{A})\}$ for 
all $n$ up to the bound and we are done.

\emph{Case II.} All quotients $\alpha_i/\alpha_j$ are roots of unity, and all
roots of the minimal polynomial of $\matr{A}$ are simple. Then the system is equivalent to 
\[
\orbitcoeff=\frac{\alpha_1^n}{P(\alpha_1)}\wedge\bigwedge_{i<j}
\frac{\eq(\alpha_i,0)}{\eq(\alpha_j,0)}.
\]
It is sufficient to determine whether there exists some $n$ which satisfies
\begin{equation}\label{eq: 1d orbit, all ratio equations}
\bigwedge_{i<j}\frac{\eq(\alpha_i,0)}{\eq(\alpha_j,0)}.
\end{equation}
Consider each equation $\eq(\alpha_i,0)/\eq(\alpha_j,0)$:
\begin{equation}\label{eq: 1d orbit, one ratio equation}
\left(\frac{\alpha_i}{\alpha_j}\right)^n=\frac{P(\alpha_i)}{P(\alpha_j)}.
\end{equation}
Suppose $\alpha_i/\alpha_j$ is an $r$-th root of unity. If the right-hand side
of (\ref{eq: 1d orbit, one ratio equation}) is also an $r$-th root of unity,
then the solutions of (\ref{eq: 1d orbit, one ratio equation}) are $n\equiv
t\bmod r$ for some $t$. If not, then (\ref{eq: 1d orbit, one ratio
equation}) has no solution, so the entire system (\ref{eq: system of
equations}) has no solution, and the problem instance is negative. By Lemma
\ref{lem:operations on algebraic numbers}, we can determine in polynomial time
whether the right-hand side of (\ref{eq: 1d orbit, one ratio equation}) is a
root of unity, and if so, calculate $t$. We transform each equation in
(\ref{eq: 1d orbit, all ratio equations}) into an equivalent congruence in $n$.
This gives a system of congruences in $n$ which is equivalent to (\ref{eq: 1d
orbit, all ratio equations}).  We solve it using the Chinese Remainder Theorem.
The problem instance is positive if and only if the system of congruences has a
solution.

\emph{Case III.} All quotients $\alpha_i/\alpha_j$ are roots of unity, and
$f_A(x)$ has repeated roots. We transform the system into an equivalent one in
the following way. First, we include in the new system all the quotients of
equations $\eq(\alpha_i,0)$ as in Case 2. Second, for each repeated
root $\alpha_i$ of $f_A(x)$, we take the quotients
$\bigwedge_{j=0}^{\mul(\alpha_i)-2}\eq(\alpha_i,j)/\eq(\alpha_i,j+1)$.
Third, we include the equation $\orbitcoeff=\alpha_1/P(\alpha_1)$.
\[
\bigwedge_{i<j}\frac{\eq(\alpha_i,0)}{\eq(\alpha_j,0)}
\wedge\bigwedge_i\bigwedge_{j=0}^{\mul(\alpha_i)-2}
\frac{\eq(\alpha_i,j)}{\eq(\alpha_i,j+1)}
\wedge \orbitcoeff=\frac{\alpha_1}{P(\alpha_1)}.
\]
We solve the first conjunct as in Case 2. If there is no solution, then we are
done. Otherwise, the solution is some congruence $n\equiv t_1\mbox{ mod }t_2$.
For the remainder of the system, each ratio
$\eq(\alpha_i,j)/\eq(\alpha_i,j+1)$ contributed by a repeated
root $\alpha_i$ has the shape 
\[ \frac{\alpha_i}{n-j}=\frac{P^{(j)}(\alpha_i)}{P^{(j+1)}(\alpha_i)}, \]
which is equivalent to
\begin{equation}\label{eq: 1d orbit, rep root eqn}
n=j+\frac{P^{(j+1)}(\alpha_i)}{P^{(j)}(\alpha_i)}\alpha_i.
\end{equation}
For each such equation (\ref{eq: 1d orbit, rep root eqn}), we calculate the
right-hand side in polynomial time, using the methods outlined in Section
\ref{appAlgnum}, and check whether it is in $\nat$. If not, then the
system has no solution. Otherwise, (\ref{eq: 1d orbit, rep root eqn}) points to
a single candidate $n_0$. We do this for all equations where $n$ appears
outside the exponent. If they point to the same value of $n$, then the system
is equivalent to 
\begin{align*}
n & \equiv  t_1\bmod t_2\\
n & =  n_0\\
\orbitcoeff & =  \alpha_1^n/P(\alpha_1)
\end{align*}
We check whether $n_0$ satisfies the congruence and we are done.

\section{Two-dimensional target space}
\label{secTwodim}

Suppose we are given a rational square matrix $\matr{A}$ and polynomials 
$P_1,P_2$ with rational coefficients such that $P_1(\matr{A})$ and
$P_2(\matr{A})$ are linearly independent over $\rat$. 
We want to decide whether there exists $n\in\nat$ such that
$\matr{A}^n$ lies in the $\rat$-vector space $\spa\{P_1(\matr{A}),P_2(\matr{A})\}$.
We have derived a Master System of equations (\ref{eq: system of equations})
in the unknowns $(n,\orbitcoeff_1,\orbitcoeff_2)$ whose solutions
are precisely the solutions of the matrix equation 
$\matr{A}^n=\orbitcoeff_1 P_1(\matr{A}) + \orbitcoeff_2 P_2(\matr{A})$.

In this section, we will show that there exists a bound
$N$, exponentially large in the size of the input, 
such that if the problem instance is positive, then there exists
a witness exponent $n$ with $n<N$. This will be sufficient to show that the 
problem is in the complexity class $\nprp$, as outlined earlier. 

Notice that we may freely assume that the eigenvalues of $\matr{A}$ are non-zero. 
Indeed, if $0$ is an eigenvalue, then consider $\eq(0,0)$:
\[ 0 = \orbitcoeff_1P_1(0) + \orbitcoeff_2P_2(0). \]
If at least one of $P_1(0)$, $P_2(0)$ is non-zero, then we have a linear
dependence between $\orbitcoeff_1, \orbitcoeff_2$. Then we express one of
the coefficients $\orbitcoeff_1,\orbitcoeff_2$ in terms of the other, obtaining
a Master System of dimension $1$, and then the claim follows inductively.
Otherwise, if $P_1(0)=P_2(0)=0$, then
$\eq(0,0)$ is trivially satisfied for all $n,\orbitcoeff_1,\orbitcoeff_2$, so we 
remove $\eq(0,0)$ from the Master System without altering the set of solutions. 
We examine in this way all equations contributed by
$0$, either removing them from the system, or obtaining a lower-dimensional
system which then yields the required bound $N$ inductively.

As outlined in Section \ref{sec: orbit.outline}, we show the existence of the bound $N$
by performing a case analysis on $n\bmod L$, where
\[ 
L = \mbox{lcm}\{ \mbox{order}(\lambda_i/\lambda_j) : \lambda_i,\lambda_j\mbox{ eigenvalues of $\matr{A}$
and $\lambda_i/\lambda_j$ root of unity} \}.
\]
We will show that for any fixed value $r\in\{0,\dots,L-1\}$, there exists a bound $N_r$,
exponentially large in the size of the input, such that if the Master System has a solution
with exponent of residue $r$ modulo $L$, then it has a solution with exponent $n$ such that
$n<N_r$. To obtain the bounds $N_r$, we show how the Master System can be manipulated algebraically
in polynomial time to yield a non-degenerate linear recurrence sequence of order $3$
whose zeros are a superset of the exponents $n$ which solve the Master System.
This manipulation is a proof technique to show the existence of the bound
$N_r$, not a feature of the algorithm. The decision method is instead the guess-and-check
procedure explained in Section \ref{sec: orbit.outline}.

Thus, from here onwards, we assume we are given a fixed $r$, which increases the input
size only polynomially, and are interested solely in exponents $n$ with $n\bmod L=r$.
Since we admit degenerate problem instances, we need to consider the 
relation $\sim$ on the eigenvalues of $\matr{A}$, defined by 
\[ \alpha\sim\beta\mbox{ if and only if $\alpha/\beta$ is a root of unity}. \]
It is clear that $\sim$ is an equivalence relation. The equivalence classes
${\cls C}_1,\dots,{\cls C}_k$ of $\sim$ are of two kinds.  First, a class can
be its own image under complex conjugation: 
\[ {\cls C}_i=\{\overline{\alpha}\:|\:\alpha\in{\cls C}_i\} \]
Each such self-conjugate class $\{\alpha_1,\dots,\alpha_s\}$ has the form
$\{\alpha\omega_1,\dots,\alpha\omega_s\}$ where $\omega_i$ are roots of unity,
and $|\alpha_j|=\alpha\in \ra$.  Call this $\alpha$ the
\emph{representative} of the equivalence class ${\cls C}_i$. Second, if an equivalence
class is not self-conjugate, then its image under complex conjugation must be
another equivalence class of $\sim$.  Thus, the remaining equivalence classes
of $\sim$ are grouped into pairs $({\cls C}_i,{\cls C}_j)$ such that ${\cls
C}_i= \{\overline{x}\:|\:x\in{\cls C}_j\} =\overline{{\cls C}_j}$.  In this
case, we can write ${\cls C}_i$ and ${\cls C}_j$ as
\[ {\cls C}_i=\{\lambda\omega_1,\dots,\lambda\omega_s\} \]
\[{\cls C}_j=\{\overline{\lambda\omega_1},\dots,\overline{\lambda\omega_s}\}\]
where $\omega_i$ are roots of unity, $\lambda\in\alg$ and $\arg(\lambda)$
is an irrational multiple of $2\pi$. Call $\lambda$ the representative of ${\cls
C}_i$ and $\overline{\lambda}$ the representative of ${\cls C}_j$.

Observe that the representatives
of self-conjugate classes are distinct positive real numbers, and that 
no ratio of representatives can be a root of unity.
Recall also that we can assume the eigenvalues of $\matr{A}$ are
algebraic integers, as a by-product of the reduction from the Orbit Problem.
Since roots of unity and their multiplicative inverses are algebraic integers,
it follows that the representatives of equivalence classes must also be algebraic
integers.

Let 
\[ \Eq({\cls C})=\bigcup_{\alpha\in{\cls C}}\Eq(\alpha) \]
denote the set of equations contributed to the system by the eigenvalues in
${\cls C}$, and let 
\[
Eq({\cls C},i)=\bigcup_{
\begin{array}{c}
\alpha\in{\cls C}\\
\mul(\alpha)>i
\end{array}}
\{\eq(\alpha,i)\} 
\]
denote the set of $i$-th derivative equations contributed by the roots in
${\cls C}$. 

To show the existence of the required bound $N_r$, we will perform a case analysis
on the number of equivalence classes of $\sim$.

\emph{Case I.} Suppose $\sim$ has exactly one equivalence class ${\cls
C}=\{\alpha\omega_1,\dots,\alpha\omega_s\}$, necessarily self-conjugate, with
representative $\alpha$. 
Consider the set of equations $\Eq({\cls C},0)$:
\begin{align*}
(\alpha\omega_1)^n & =\orbitcoeff_1P_1(\alpha\omega_1)+\orbitcoeff_2P_2(\alpha\omega_1)\\
& \vdots\\
(\alpha\omega_s)^n & =\orbitcoeff_1P_1(\alpha\omega_s)+\orbitcoeff_2P_2(\alpha\omega_s)
\end{align*}
For our fixed $r$, the values of $\omega_1^n,\dots,\omega_s^n$ are easy to calculate in
polynomial time, since $\omega_i$ are roots of unity whose order divides $L$.
Then the equations $\Eq({\cls C},0)$ are equivalent to 
\begin{equation}\label{eq: scrunching1}
\left[\begin{array}{c}
\alpha^n\\
\vdots\\
\alpha^n
\end{array}
\right]
=\matr{B}\left[\begin{array}{c}
\orbitcoeff_1\\
\orbitcoeff_2
\end{array}\right],
\end{equation}
where $\matr{B}$ is an $s\times2$ matrix over $\alg$ which, given $r$, is 
computable in polynomial time.
Next we subtract the first row of (\ref{eq: scrunching1}) from rows
$2,\dots,s$, obtaining 
\[
\alpha^n=\orbitphi_1\orbitcoeff_1+\orbitphi_2\orbitcoeff_2
\wedge
\left[\begin{array}{c}
0\\
\vdots\\
0
\end{array}\right]
=\matr{B'}\left[\begin{array}{c}
\orbitcoeff_1\\
\orbitcoeff_2
\end{array}\right].
\]
Here $(\orbitphi_1,\orbitphi_2)$ 
is the first row of the matrix $\matr{B}$, and $\matr{B'}$ is the result 
of subtracting 
$(\orbitphi_1,\orbitphi_2)$
from each of the bottom $s-1$ rows of $\matr{B}$.
Thus, $\Eq({\cls C},0)$ is equivalent to 
$\alpha^n=\orbitphi_1\orbitcoeff_1+\orbitphi_2\orbitcoeff_2$
together with the constraint that 
$(\orbitcoeff_1,\orbitcoeff_2)^T$
must lie in the nullspace of $\matr{B'}$. We now consier the nullspace of $\matr{B'}$.
If its dimension is less than $2$, then we have a linear constraint on
$\orbitcoeff_1,\orbitcoeff_2$. 
This constraint is of the form $\orbitcoeff_1=\chi\orbitcoeff_2$ when the nullspace of $\matr{B'}$
has dimension $1$, and is $\orbitcoeff_1=\orbitcoeff_2=0$ when the nullspace is of dimension $0$.
In both cases, the Master System is equivalent to a lower-dimensional one which
may be computed in polynomial time, so the existence of the bound $N_r$ follows
inductively. 
In the case when the nullspace of $\matr{B'}$ has dimension $2$, then the linear
constraint is vacuous, and $\Eq({\cls C},0)$ is equivalent to
$\alpha^n=\orbitphi_1\orbitcoeff_1+\orbitphi_2\orbitcoeff_2$.

In the same way, for this fixed $r$,
$\Eq({\cls C},1)$ reduces to a single first-derivative equation:
\[ n\alpha^{n-1}=\orbitphi_3\orbitcoeff_1+\orbitphi_4\orbitcoeff_2. \]
We do this for all $\Eq({\cls C},i)$, obtaining a system of equations
equivalent to (\ref{eq: system of equations}) based on the representative of ${\cls C}$,
rather than the actual eigenvalues in ${\cls C}$. Denote the resulting set of
equations by ${\cal F}(\Eq({\cls C}))$.

If some eigenvalue $x\in{\cls C}$ has $\mul(x)\geq3$, then ${\cal
F}(\Eq({\cls C}))$ contains the following triple of equations:
\begin{equation}\label{eq: 2do, 012 tuple}
\left[\begin{array}{c}
\alpha^n\\
n\alpha^{n-1}\\
n(n-1)\alpha^{n-2}
\end{array}\right]
=\orbitcoeff_1\left[\begin{array}{c}
\orbitphi_1\\
\orbitphi_3\\
\orbitphi_5
\end{array}\right]
+\orbitcoeff_2\left[\begin{array}{c}
\orbitphi_2\\
\orbitphi_4\\
\orbitphi_6
\end{array}\right].
\end{equation}
If the vectors on the right-hand side of (\ref{eq: 2do, 012 tuple}) are
linearly independent over $\alg$, then they specify a plane in $\alg^3$, and the
triple states that the point on the left-hand side must lie on this plane. 
Letting
$(A_1,A_2,A_3)^T$ be the normal
of the plane, we obtain 
\begin{align*}
& A_1\alpha^n+A_2n\alpha^{n-1}+A_3n(n-1)\alpha^{n-2}=0 \\
\iff & A_1\alpha^2 + A_2n\alpha + A_3 n(n-1) = 0.
\end{align*}
This is a quadratic equation in $n$. It has at most two roots, both at most
exponentially large in the size of the input, so we just take $N_r$ to be the greater root. 
If the vectors on the right-hand side of (\ref{eq: 2do, 012 tuple}) are linearly dependent 
over $\alg$, then the exponents $n$ which solve (\ref{eq: 2do, 012 tuple}) are precisely
those which solve:
\[
\left[\begin{array}{c}
\alpha^n\\
n\alpha^{n-1}\\
n(n-1)\alpha^{n-2}
\end{array}\right]
=\orbitcoeff_1\left[\begin{array}{c}
\orbitphi_1\\
\orbitphi_3\\
\orbitphi_5
\end{array}\right].
\]
We divide the first equation by the second to obtain 
\[ \frac{\alpha}{n}=\frac{\orbitphi_1}{\orbitphi_3}, \]
which limits $n$ at most one, exponentially large, candidate value $\alpha \orbitphi_3/\orbitphi_1$.

If all eigenvalues $x$ in ${\cls C}$ have $\mul(x)\leq2$ and at least
one has $\mul(x)=2$, then ${\cal F}(\Eq({\cls C}))$ consists of
exactly two equations:
\begin{equation}\label{eq: 2do, 01 tuple}
\left[\begin{array}{c}
\alpha^n\\
n\alpha^{n-1}
\end{array}\right]
=\orbitcoeff_1\left[\begin{array}{c}
\orbitphi_1\\
\orbitphi_3
\end{array}\right]
+\orbitcoeff_2\left[\begin{array}{c}
\orbitphi_2\\
\orbitphi_4
\end{array}\right].
\end{equation}
If $(\orbitphi_1,\orbitphi_3)^T$ and
$(\orbitphi_2,\orbitphi_4)^T$ are linearly
independent over $\alg$, then the right-hand side of (\ref{eq: 2do, 01 tuple}) spans all of
$\alg^2$ as $\orbitcoeff_1,\orbitcoeff_2$ range over $\alg$. Then (\ref{eq: 2do, 01 tuple}) 
is solved by all $n\in\nat$, so we can take $N_r=L$. Otherwise, the
exponents $n$ which solve (\ref{eq: 2do, 01 tuple}) are exactly those which solve
\[
\left[\begin{array}{c}
\alpha^n\\
n\alpha^{n-1}
\end{array}\right]
=\orbitcoeff_1\left[\begin{array}{c}
\orbitphi_1\\
\orbitphi_3
\end{array}\right].
\]
This limits $n$ to at most one candidate value $\alpha \orbitphi_3/\orbitphi_1$, which is exponentially large in
the input size.

Finally, if all eigenvalues $x$ in ${\cls C}$ have $\mul(x)=1$, then
${\cal F}(\Eq({\cls C}))$ contains only the equation
\[ \alpha^n=\orbitcoeff_1\orbitphi_1+\orbitcoeff_2\orbitphi_2, \]
which is solved by all $n\in\nat$ if at least one of $\orbitphi_1,\orbitphi_2$ is non-zero,
and has no solutions if $\orbitphi_1=\orbitphi_2=0$. Either way, we take $N_r=L$ and
are done.

\emph{Case II.} Suppose $\sim$ has exactly two equivalence classes, ${\cls
C}_1$ and ${\cls C}_2$, with respective representatives $\alpha$ and $\beta$, so that 
\[ {\cls C}_1=\{\alpha\omega_1,\dots,\alpha\omega_s\}, \]
\[ {\cls C}_2=\{\beta\omega_1',\dots,\beta\omega_l'\}. \]
The classes could be self-conjugate, in which case $\alpha,\beta\in
\alg\cap\re$, or they could be each other's image under complex
conjugation, in which case $\alpha=\overline{\beta}$. In both cases,
$\alpha/\beta$ is not a root of unity.

As in \emph{Case I,} we transform the system
$\Eq({\cls C}_1)\wedge \Eq({\cls C}_2)$ into the equivalent
system ${\cal F}(\Eq({\cls C}_1))\wedge{\cal F} (\Eq({\cls
C}_2))$.  If all eigenvalues $x$ of $\matr{A}$ have $\mul(x)=1$, then the
resulting system consists of two equations, one for each equivalence class of $\sim$:
\begin{align*}
\alpha^n&=\orbitcoeff_1\orbitphi_1+\orbitcoeff_2\orbitphi_2\\
\beta^n&=\orbitcoeff_1\orbitphi_3+\orbitcoeff_2\orbitphi_4
\end{align*}
If $(\orbitphi_1,\orbitphi_3)^T$ and
$(\orbitphi_2,\orbitphi_4)^T$ are linearly
independent over $\alg$, then there is a solution for each $n$, so just take $N_r=L$.
Otherwise, it suffices to look for $n$ which satisfies
\begin{align*}
\alpha^n &=\orbitcoeff_1\orbitphi_1\\
\beta^n &=\orbitcoeff_1\orbitphi_3
\end{align*}
and hence
\[ \left(\frac{\alpha}{\beta}\right)^n=\frac{\orbitphi_1}{\orbitphi_3}. \]
A bound on $n$ follows from Lemma \ref{lem: algebraic number power problem}.
This argument relies crucially on the fact that $\alpha/\beta$ is not a root of
unity.

If some eigenvalue $x$ of $\matr{A}$ has $\mul(x)\geq2$, say $x\in{\cls C}_1$,
then the system contains the following triple of equations:
\begin{equation}\label{eq: 2do, 010 tuple}
\left[\begin{array}{c}
\alpha^n\\
n\alpha^{n-1}\\
\beta^n
\end{array}\right]
=\orbitcoeff_1\left[\begin{array}{c}
\orbitphi_1\\
\orbitphi_3\\
\orbitphi_5
\end{array}\right]
+\orbitcoeff_2\left[\begin{array}{c}
\orbitphi_2\\
\orbitphi_4\\
\orbitphi_6
\end{array}\right].
\end{equation}
If the vectors on the right-hand side of (\ref{eq: 2do, 010 tuple}) are
linearly dependent over $\alg$, so that the right-hand side describes a space of dimension
$1$, it suffices to look for solutions to 
\[
\left[\begin{array}{c}
\alpha^n\\
n\alpha^{n-1}\\
\beta^n
\end{array}\right]
=\orbitcoeff_1\left[\begin{array}{c}
\orbitphi_1\\
\orbitphi_3\\
\orbitphi_5
\end{array}\right].
\]
Then dividing we obtain 
\[ \frac{\alpha}{n}=\frac{\orbitphi_1}{\orbitphi_3}, \]
which limits $n$ to at most one, exponentially large candidate value $\alpha c_3/c_1$.
Otherwise, if the vectors on the right-hand side of (\ref{eq: 2do, 010 tuple})
are linearly independent over $\alg$, we calculate the normal $(A_1,A_2,A_3)^T$
to the plane described by them and obtain
\[ A_1\alpha^n+A_2n\alpha^{n-1}+A_3\beta^n=0. \]
A bound on $n$ which is exponential in the size of the input follows from Lemma
\ref{lem: skolem 3, one simple and one repeated}. This again relies on the fact
that $\alpha/\beta$ cannot be a root of unity.

\emph{Case III.} Suppose $\sim$ has at least three equivalence classes.  Then
we can choose eigenvalues $\alpha,\beta,\gamma$, each from a distinct
equivalence class, and consider $\eq(\alpha,0)$, $\eq(\beta,0)$
and $\eq(\gamma,0)$:
\[
\left[\begin{array}{c}
\alpha^n\\
\beta^n\\
\gamma^n
\end{array}\right]
=\orbitcoeff_1\left[\begin{array}{c}
P_1(\alpha)\\
P_1(\beta)\\
P_1(\gamma)
\end{array}\right]
+\orbitcoeff_2\left[\begin{array}{c}
P_2(\alpha)\\
P_2(\beta)\\
P_2(\gamma)
\end{array}\right].
\]
If the vectors on the right-hand side are linearly independent over $\alg$, we eliminate
$\orbitcoeff_1,\orbitcoeff_2$ to obtain 
\[ A_1\alpha^n+A_2\beta^n+A_3\gamma^n=0. \]
The left-hand side is a non-degenerate linear recurrence sequence of order $3$,
so a bound on $n$ follows from Lemmas \ref{lem: skolem 3, one dominant root},
\ref{lem: skolem 3, two dominant roots}, \ref{lem: skolem 3, three dominant
roots}.  If the vectors on the right-hand side are not linearly independent over $\alg$,
then we may equivalently consider
\[
\left[\begin{array}{c}
\alpha^n\\
\beta^n\\
\gamma^n
\end{array}\right]
=\orbitcoeff_1\left[\begin{array}{c}
P_1(\alpha)\\
P_1(\beta)\\
P_1(\gamma)
\end{array}\right],
\]
which gives 
\[ \left(\frac{\alpha}{\beta}\right)^n=\frac{P_1(\alpha)}{P_1(\beta)}. \]
An exponential bound on $n$ follows from Lemma \ref{lem: algebraic number power
problem}, because $\alpha/\beta$ is not a root of unity.

Thus, we have now shown that for any $r\in\{0,\dots,L-1\}$, the required bound $N_r$
exists and is at most exponential in the size of the input. Then $N=\max\{N_r : r\in\{0,\dots,L-1\}\}$
exists and is exponentially large, so the  Orbit Problem with two-dimensional target space
is in $\nprp$, by the complexity argument of Section \ref{sec: orbit.outline}.
\section{Three-dimensional target space}

\label{secThreedim}

Suppose we are given a rational square matrix $\matr{A}$ and polynomials 
$P_1,P_2,P_3$ with rational coefficients such that 
$P_1(\matr{A}),P_2(\matr{A}),P_3(\matr{A})$ are linearly independent over $\rat$. 
We want to decide whether there exists $n\in\nat$ such that
$\matr{A}^n$ lies in the $\rat$-vector space 
$\spa\{P_1(\matr{A}),P_2(\matr{A}),P_3(\matr{A})\}$.
We have derived a Master System of equations (\ref{eq: system of equations})
in the unknowns $(n,\orbitcoeff_1,\orbitcoeff_2,\orbitcoeff_3)$ whose solutions
are precisely the solutions of the matrix equation 
$\matr{A}^n=\orbitcoeff_1 P_1(\matr{A}) + \orbitcoeff_2 P_2(\matr{A}) + \orbitcoeff_3 P_3(\matr{A})$.

In this section, we will show that there exists a bound
$N$, exponentially large in the size of the input, 
such that if the problem instance is positive, then there exists
a witness exponent $n$ with $n<N$. This will be sufficient to show that the 
problem is in the complexity class $\nprp$, as outlined earlier. 

The eigenvalues of $\matr{A}$ may be assumed to be non-zero algebraic
numbers: if $0$ is an eigenvalue, then $\eq(0,0)$ gives a linear
dependence between the coefficients $\orbitcoeff_1,\orbitcoeff_2,\orbitcoeff_3$, 
yielding a lower-dimensional Master System, so the existence of the bound $N$ 
follows inductively.

Following the strategy of the two-dimensional case, we will perform a case analysis
on the residue of $n$ modulo $L$: let $n\bmod L = r$ be fixed throughout this section.
To obtain the required bound $N$, it is sufficient to derive a bound
$N_r$, also exponentially large in the size of the input, such that if there exists
a witness exponent of residue $r$ modulo $L$, then such a witness may be found which 
does not exceed $N_r$. As in the two-dimensional case, we will select tuples of
equations and obtain a bound on $n$ using the results for the  Skolem Problem for
recurrences of order $4$ in Section \ref{appSkolem4}. We will again perform a
case analysis on the equivalence classes of the relation $\sim$. 

\emph{Case I.} Suppose there are at least two pairs of classes 
$({\cls C}_i,\overline{{\cls C}_i})$, $({\cls C}_j,\overline{{\cls C}_j})$ which are
not self-conjugate. Then let $\alpha\in{\cls C}_i$, $\beta=\overline{\alpha}\in
\overline{{\cls C}_i}$, $\gamma\in{\cls C}_j$,
$\delta=\overline{\gamma}\in\overline{{\cls C}_j}$.  Then we consider the tuple
of equations 
\begin{equation}\label{eq: 3do, 1111 tuple}
\left[\begin{array}{c}
\alpha^n\\
\beta^n\\
\gamma^n\\
\delta^n
\end{array}\right]
=\orbitcoeff_1\left[\begin{array}{c}
P_1(\alpha)\\
P_1(\beta)\\
P_1(\gamma)\\
P_1(\delta)
\end{array}\right]
+\orbitcoeff_2\left[\begin{array}{c}
P_2(\alpha)\\
P_2(\beta)\\
P_2(\gamma)\\
P_2(\delta)
\end{array}\right]
+\orbitcoeff_3\left[\begin{array}{c}
P_3(\alpha)\\
P_3(\beta)\\
P_3(\gamma)\\
P_3(\delta)
\end{array}\right].
\end{equation}
If the vectors on the right-hand side are linearly dependent over $\alg$, then we rewrite
the right-hand side as a linear combination of at most two vectors and obtain the required
bound on $n$ by considering a linear recurrence sequence of order $2$ or $3$. 
If the vectors on the right-hand side of (\ref{eq: 3do, 1111
tuple}) are linearly independent over $\alg$, then we calculate the normal of the
three-dimensional subspace of $\alg^4$ that they span, obtaining an
equation 
\begin{equation}\label{eq: 3do, 1111 tuple, skolem form}
A_1\alpha^n+A_2\beta^n+A_3\gamma^n+A_4\delta^n=0
\end{equation}
and hence an exponential bound on $n$ from Lemmas \ref{lem: skolem 4, 1111, not
all same magnitude} and \ref{lem: skolem 4, 1111, same magnitude}. We are
relying on the fact that the ratios of $\alpha,\beta,\gamma,\delta$ are not
roots of unity. Notice that we need $(\alpha,\beta)$ and $(\gamma,\delta)$ to
be pairwise complex conjugates in order to apply Lemma \ref{lem: skolem 4,
1111, same magnitude}. Notice also that we may assume without loss of generality
that $\alpha,\beta,\gamma,\delta$ are algebraic integers, as Lemma 
\ref{lem: skolem 4, 1111, same magnitude} requires. Indeed, as remarked at the
beginning of Section \ref{secReduction}, the input data may be assumed
to be over $\zed$, instead of $\rat$, with the simple technique of scaling the
input by an integer chosen so as to `clear the denominators'. Then $\matr{A}$
is an integer matrix, so its eigenvalues are algebraic integers.

\emph{Case II.} Suppose now that there is exactly one pair of classes 
$({\cls C}_i,\overline{{\cls C}_i})$ which are not self-conjugate. In general, for any
eigenvalue $x$ of $\matr{A}$ we must have
$\mathit{mul}(x)=\mathit{mul}(\overline{x})$.  Therefore, if any eigenvalue
$\alpha\in{\cls C}_i$ has $\mathit{mul}(\alpha)>1$, we can select the tuple of
equations $\eq(\alpha,0)$, $\eq(\alpha,1)$,
$\eq(\overline{\alpha},0)$, $\eq(\overline{\alpha},1)$:
\[
\left[\begin{array}{c}
\alpha^n\\
\overline{\alpha}^n\\
n\alpha^{n-1}\\
n\overline{\alpha}^{n-1}
\end{array}\right]
=\orbitcoeff_1\left[\begin{array}{c}
P_1(\alpha)\\
P_1(\overline{\alpha})\\
P_1'(\alpha)\\
P_1'(\overline{\alpha})
\end{array}\right]
+\orbitcoeff_2\left[\begin{array}{c}
P_2(\alpha)\\
P_2(\overline{\alpha})\\
P_2'(\alpha)\\
P_2'(\overline{\alpha})
\end{array}\right]
+\orbitcoeff_3\left[\begin{array}{c}
P_3(\alpha)\\
P_3(\overline{\alpha})\\
P_3'(\alpha)\\
P_3'(\overline{\alpha})
\end{array}\right].
\]
This gives a non-degenerate linear recurrence sequence of order 4 over
$\alg$ for a recurrence sequence with two repeated characteristic roots:
\[
A_1\alpha^n+A_2\overline{\alpha}^n
+A_3n\alpha^{n-1}+A_4n\overline{\alpha}^{n-1}=0.
\]
An exponential bound $N$ on $n$ follows from Lemma \ref{lem: skolem 4, 22}, since
$\alpha/\overline{\alpha}$ is not a root of unity.

We can now assume that eigenvalues in ${\cls C}_i$ and $\overline{{\cls C}_i}$
contribute exactly one equation to the system. Now we use the fixed value of $r$
to transform
$\Eq({\cls
C}_i)\wedge\Eq (\overline{{\cls C}_i})$ into ${\cal F}
(\Eq({\cls C}_i))\wedge{\cal F} (\Eq(\overline{{\cls C}_i}))$.
Since all eigenvalues in ${\cls C}_i$ and $\overline{{\cls C}_i}$ contribute
one equation each, ${\cal F}(\Eq({\cls C}_i))\wedge{\cal F}(\Eq
(\overline{{\cls C}_i}))$ is just
\begin{align*}
\lambda^n & =\orbitcoeff_1\orbitphi_1+\orbitcoeff_2\orbitphi_2+\orbitcoeff_3\orbitphi_3\\
\overline{\lambda}^n & =\orbitcoeff_1\orbitphi_4+\orbitcoeff_2\orbitphi_5+\orbitcoeff_3\orbitphi_6
\end{align*}
where $\lambda,\overline{\lambda}$ are the representatives of ${\cls C}_i$ and
$\overline{{\cls C}_i}$. We do the same to all self-conjugate classes as well,
reducing the system of equations to an equivalent system based on the representatives of
the equivalence classes, not the actual eigenvalues of $\matr{A}$. This is beneficial,
because the representatives cannot divide to give roots of unity, so we can use 4-tuples
of equations to construct non-degenerate linear recurrence sequences of order
4. 

If there are at least two self-conjugate equivalence classes, with respective
representatives $\alpha,\beta$, we take the tuple 
\begin{align*}
\lambda^n & =\orbitcoeff_1\orbitphi_1+\orbitcoeff_2\orbitphi_2+\orbitcoeff_3\orbitphi_3\\
\overline{\lambda}^{n} & =\orbitcoeff_1\orbitphi_4+\orbitcoeff_2\orbitphi_5+\orbitcoeff_3\orbitphi_6\\
\alpha^n & =\orbitcoeff_1\orbitphi_7+\orbitcoeff_2\orbitphi_8+\orbitcoeff_3\orbitphi_9\\
\beta^n & =\orbitcoeff_1\orbitphi_{10}+\orbitcoeff_2\orbitphi_{11}+\orbitcoeff_3\orbitphi_{12}
\end{align*}
and obtain the following equation, where the left-hand side is a non-degenerate
linear recurrence sequence:
\[ A_1\lambda^n+A_2\overline{\lambda}^n+A_3\alpha^n+A_4\beta^n=0. \]
Then we have an exponentially large bound $N_r$ from Lemmas \ref{lem: skolem 4, 1111, not all same
magnitude} and \ref{lem: skolem 4, 1111, same magnitude}. Similarly, if there
is only one self-conjugate equivalence class, with representative $\alpha$, but some of
its eigenvalues are repeated, we use the tuple
\begin{align*}
\lambda^n &= \orbitcoeff_1\orbitphi_1+\orbitcoeff_2\orbitphi_2+\orbitcoeff_3\orbitphi_3\\
\overline{\lambda}^n &= \orbitcoeff_1\orbitphi_4+\orbitcoeff_2\orbitphi_5+\orbitcoeff_3\orbitphi_6\\
\alpha^n &= \orbitcoeff_1\orbitphi_7+\orbitcoeff_2\orbitphi_8+\orbitcoeff_3\orbitphi_9\\
n\alpha^{n-1} &= \orbitcoeff_1\orbitphi_{10}+\orbitcoeff_2\orbitphi_{11}+\orbitcoeff_3\orbitphi_{12}
\end{align*}
to obtain the non-degenerate instance 
\[ A_1\lambda^n+A_2\overline{\lambda}^n+A_3\alpha^n+A_4n\alpha^{n-1}=0, \]
which gives an exponential bound $N_r$ according to Lemma \ref{lem: skolem 4, 211}.  If
there is exactly one self-conjugate class, with representative $\alpha$, containing no
repeated roots, then the system consists of three equations:
\begin{align*}
\lambda^n&=\orbitcoeff_1\orbitphi_1+\orbitcoeff_2\orbitphi_2+\orbitcoeff_3\orbitphi_3\\
\overline{\lambda}^n&=\orbitcoeff_1\orbitphi_4+\orbitcoeff_2\orbitphi_5+\orbitcoeff_3\orbitphi_6\\
\alpha^n&=\orbitcoeff_1\orbitphi_7+\orbitcoeff_2\orbitphi_8+\orbitcoeff_3\orbitphi_9
\end{align*}
Depending on whether the vectors 
$(\orbitphi_1,\orbitphi_4,\orbitphi_7)^T$,
$(\orbitphi_2,\orbitphi_5,\orbitphi_8)^T$,
$(\orbitphi_3,\orbitphi_6,\orbitphi_9)^T$
are linearly independent over $\alg$, either this triple is solved by all $n\in\nat$ (in which case
set $N_r=L$), or it reduces to a lower-dimensional Master System, yielding the claim inductively. Finally, if
there are no self-conjugate classes, the system consists of only two equations:
\begin{align*}
\lambda^n&=\orbitcoeff_1\orbitphi_1+\orbitcoeff_2\orbitphi_2+\orbitcoeff_3\orbitphi_3\\
\overline{\lambda}^n&=\orbitcoeff_1\orbitphi_4+\orbitcoeff_2\orbitphi_5+\orbitcoeff_3\orbitphi_6
\end{align*}
Again, depending on the dimension of 
\[
\spa\left\{
\left[\begin{array}{c}
\orbitphi_1\\
\orbitphi_4
\end{array}\right],
\left[\begin{array}{c}
\orbitphi_2\\
\orbitphi_5
\end{array}\right],
\left[\begin{array}{c}
\orbitphi_3\\
\orbitphi_6
\end{array}\right]\right\},
\]
we can either set the bound $N_r$ to $L$ (because the transformed Master System is solved by all $n\in\nat$),
or obtain $N_r$ inductively from a lower-dimensional Master System.

\emph{Case III.} All equivalence classes of $\sim$ are self-conjugate. 
The techniques used for this case are identical to the ones already
presented. We use the fixed value of $r$ to reduce to a non-degenerate system
based on the representatives of the classes, with the number of equations
contributed by each class determined by the maximum multiplicity of an
eigenvalue in that class.

If there are less than four equations, then we study the dimension of the
vector space spanned by the vectors on the right-hand side: if it has full
dimension, then we see the Master System is satisfied by all $n$ of the correct
residue $r$, so we can just set $N_r = L$. Otherwise, we obtain the bound
inductively from a lower-dimensional non-degenerate Master System.

On the other hand, if there are at least four equations, then
we can choose four equations which have a solution for $n$ if and only if
an effectively computable non-degenerate LRS of order $4$ vanishes at $n$. 
We then employ the bounds of Theorem~\ref{thm: skolem.bounds} concerning LRS of
order $4$ to obtain the desired $N_r$.

We remark here that it is only for this final case 
that we need the representatives of self-conjugate classes to be real,
necessitating the choice of the magnitude of the eigenvalues in the
class for representative, regardless of whether this magnitude is itself
an eigenvalue. The reason for this technical point is that Lemma
\ref{lem: skolem 4, 1111, same magnitude}, which gives a bound on the index
of zeros of an LRS of order $4$ with four distinct characteristic roots,
requires that the characteristic roots be closed under complex conjugation.
No strengthening of Lemma \ref{lem: skolem 4, 1111, same magnitude} is
known which avoids this precondition -- as we remark in Section \ref{appSkolem4}, 
this is the reason why the  Skolem Problem is open for LRS of order
$4$ over $\alg$. If we had chosen the representative of a self-conjugate
class to be an arbitrary (possibly complex) eigenvalue, we would obtain 
LRS of order $4$ whose characteristic roots do not satisfy the precondition
on Lemma \ref{lem: skolem 4, 1111, same magnitude}, and we would not
be able to obtain our bound $N_r$ here.

\appendixhead{CHONEV}

\received{September 2015}{}{}

\appendix

\section{Mathematical techniques}

\subsection{Algebraic numbers: representation and manipulation}\label{appAlgnum}

A complex number $\alpha$ is \emph{algebraic} if there exists a polynomial
$P\in\rats[x]$ such that $P(\alpha)=0$.  The set of algebraic numbers,
denoted by $\alg$, is a subfield of $\mathbb{C}$. The \emph{minimal
polynomial} of $\alpha$ is the unique monic polynomial of least degree which
vanishes at $\alpha$.
The \emph{degree} of
$\alpha\in\alg$ is defined as the degree of its minimal polynomial and is
denoted by $\deg(\alpha)$. The \emph{height} of $\alpha$, denoted by $H_{\alpha}$,
is defined as the
maximum absolute value of the coefficients of the integer polynomial obtained
by scaling the minimal polynomial of $\alpha$ by the least common multiple of
the denominators of its coefficients.
The roots of the minimal polynomial of $\alpha$
(including $\alpha$) are called the \emph{Galois conjugates} of $\alpha$. The
\emph{absolute norm} of $\alpha$, denoted ${\cal N}_{\mathit{abs}}(\alpha)$, is
the product of the Galois conjugates of $\alpha$. By Viete's laws, we have \[
{\cal N}_{\mathit{abs}}(\alpha)=(-1)^{\deg(\alpha)}\frac{a}{b} \] where $a,b$ are
respectively the constant term and the leading coefficient of the minimal polynomial 
of $\alpha$. 
It follows that ${\cal N}_{\mathit{abs}}(\alpha) \in\rat$.  An
\emph{algebraic integer} is an algebraic number whose minimal polynomial
has integer coefficients.
The set of algebraic integers, denoted $\mathcal{O}_\alg$, is a ring under 
the usual addition and multiplication. The algebraic integers are 
\emph{integrally closed}, that is, the roots of any monic polynomial with coefficients
in $\mathcal{O}_\alg$ are all algebraic integers. For any $\alpha\in\alg$, it is
possible to find $\beta\in\mathcal{O}_\alg$ and $m\in\zed$ such that $\alpha=\beta/m$.

The \emph{canonical representation} of an algebraic number $\alpha$ is its
minimal polynomial, along with a numerical approximation of
$\Re(\alpha)$ and $\Im(\alpha)$ of sufficient precision to
distinguish $\alpha$ from its Galois conjugates~\cite[Section 4.2.1]{Coh93}. 
More precisely, we represent $\alpha$ by the tuple 
\[
(P,x,y,R)
\in(\rat[x]\times\rat^3)
\]
meaning that $\alpha$ is the unique root of the irreducible (over $\rats$) polynomial $P$ 
which lies inside the circle
centred at $(x,y)$ in the complex plane with radius $R$. A bound due to
Mignotte \cite{mignotteRootSep} states that for roots $\alpha_j\neq\alpha_k$ of
a polynomial $P(x)$,
\begin{equation}\label{eq:root separation}
|\alpha_j-\alpha_k|>\frac{\sqrt{6}}{d^{(d+1)/2}H^{d-1}},
\end{equation}
where $d$ and $H$ are the degree and height of $P$, respectively.  Thus, if $R$
is restricted to be less than half the root separation bound, the
representation is well-defined and allows for equality checking. Observe that
given its minimal polynomial, the remaining data necessary to describe $\alpha$ is
polynomial in the length of the input. Given $P\in\rat[x]$ and $k\in\nat$, it is 
known how to obtain $k$ bits of the roots of $P$ in time polynomial in the length
of the description of $P$ and in the length of the binary representation of $k$
\cite{panApproximatingRoots}.

When we say an algebraic number $\alpha$ is given, we
assume we have a canonical description of $\alpha$. We will denote by
$\Vert\alpha\Vert $ the length of this description, assuming that integers are
expressed in binary and rationals are expressed as pairs of integers. Observe
that $|\alpha|$ is an exponentially large quantity in $\Vert\alpha\Vert$
whereas $\log|\alpha|$ is polynomially large. Notice also that $1/\log|\alpha|$
is at most exponentially large in $\Vert\alpha\Vert$.  For a rational $a$,
$\Vert a\Vert$ is just the sum of the lengths of its numerator and denominator
written in binary. For a polynomial $P\in\rat[x]$, $\Vert P\Vert$ will
denote $\sum_{j=0}^{\deg(P)}\Vert p_j\Vert$, where
$p_j$ are the coefficients of $P$.
\begin{lemma}\label{lem:operations on algebraic numbers}
Given canonical representations of $\alpha,\beta\in\alg$ and a polynomial
$P\in\rat[x]$, it is possible to compute canonical descriptions of
$\alpha\pm\beta$, $\alpha\beta^{\pm1}$ and $P(\alpha)$ in time polynomial in
the length of the input (that is, in $\Vert\alpha\Vert +\Vert\beta\Vert +\Vert
P\Vert$).
\end{lemma}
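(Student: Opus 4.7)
The plan is to treat each operation by (i) exhibiting a polynomial in $\mathbb{Q}[x]$ of which the result is a root and whose bit-size is polynomial in the input, (ii) factoring that polynomial over $\mathbb{Q}$ in polynomial time to isolate the minimal polynomial of the result, and (iii) numerically identifying which root of the minimal polynomial is the correct one, to the precision required by Mignotte's root separation bound (\ref{eq:root separation}).

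First, to build the defining polynomials, I would use resultants. Writing $\alpha+\beta$ as a root of $R_{+}(x)=\mathrm{Res}_{y}\bigl(f_{\alpha}(y),\,f_{\beta}(x-y)\bigr)$, and similarly $\alpha\beta$ as a root of $R_{\times}(x)=\mathrm{Res}_{y}\bigl(y^{n_{\beta}}f_{\alpha}(x/y),\,f_{\beta}(y)\bigr)$, $\alpha^{-1}$ as a root of $x^{n_{\alpha}}f_{\alpha}(1/x)$, and $p(\alpha)$ as a root of $R_{p}(x)=\mathrm{Res}_{y}\bigl(f_{\alpha}(y),\,x-p(y)\bigr)$. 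Each such resultant can be computed as a determinant of a Sylvester matrix whose entries are polynomially bounded; by standard effective bounds on resultant coefficients (Hadamard's inequality combined with bounds on the heights of shifted and composed polynomials), the degree of each $R$ is at most $n_{\alpha}n_{\beta}$ (or $n_{\alpha}\deg p$) and its coefficients have bit-length polynomial in $\|\alpha\|+\|\beta\|+\|p\|$.

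Next, I would invoke the Lenstra--Lenstra--Lovász polynomial-time factorization algorithm for polynomials in $\mathbb{Q}[x]$ to factor $R$ into irreducible factors, and identify the unique irreducible factor $f$ of which the desired number is a root. This is done numerically: compute a high-precision approximation of $\alpha$ and $\beta$ using Pan's root-approximation algorithm, combine them to get an approximation of the target (e.g.\ $\tilde{\alpha}+\tilde{\beta}$), and then for each irreducible factor of $R$ apply Mignotte's bound (\ref{eq:root separation}) to its roots to determine in polynomial time which factor has a root within the error ball of the approximation. That factor is necessarily the minimal polynomial of the result. Finally, to produce the isolating disk $(f,x_{0},y_{0},R_{0})$ of the canonical representation, I approximate the roots of $f$ to sufficient precision (less than a quarter of the corresponding root-separation bound) and pick the disk enclosing the computed approximation.

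The main obstacle is controlling bit-complexity throughout the numerical phase. One must ensure the working precision used to approximate $\alpha,\beta$ and the target is large enough to beat the Mignotte separation bound of the (potentially large) factors of $R$, yet still only polynomial in $\|\alpha\|+\|\beta\|+\|p\|$. Since $\log(1/\text{sep})$ is polynomial in the height and degree of the polynomial in question (and these remain polynomial after resultant formation and factoring), Pan's algorithm delivers the needed bits in polynomial time, so the bookkeeping closes. The rest — checking that $\mathbb{A}$ is closed under these operations so that $R$ genuinely has the target as a root, and that the data $(f,x_{0},y_{0},R_{0})$ satisfies the uniqueness requirement — follows from the definitions.
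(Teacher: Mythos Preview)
Your proposal is correct and follows essentially the same template as the paper's proof: construct a rational polynomial vanishing at the target via resultants, factor it with LLL, and use Pan's root approximation together with Mignotte's separation bound to select the correct irreducible factor and isolating disk. The one genuine point of divergence is the computation of $p(\alpha)$. You handle it uniformly by taking the single resultant $\mathrm{Res}_{y}\bigl(f_{\alpha}(y),\,x-p(y)\bigr)$ and then bounding its bit-size via Hadamard-type estimates. The paper instead computes $p(\alpha)$ by iterated addition and multiplication, and justifies the polynomial size of all intermediate results with a companion-matrix argument: if $A$ is the companion matrix of $f_{\alpha}$, then $p(\alpha)$ is an eigenvalue of $p(A)$, so its minimal polynomial divides the characteristic polynomial of $p(A)$, whose coefficients are visibly of polynomial bit-length from the determinant formula. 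Your route is more uniform and avoids tracking a sequence of intermediate algebraic numbers; the paper's companion-matrix trick gives a concrete, self-contained size bound without appealing to general resultant height estimates.
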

\begin{proof}
Let $R,Q$ be the minimal polynomials of $\alpha$ and $\beta$,
respectively. Then the resultant of $R(x-y)$ and $Q(y)$, interpreted as
polynomials in $y$ with coefficients in $\rat[x]$, is a polynomial in $x$
which vanishes at $\alpha+\beta$. We compute it in polynomial time using the
Sub-Resultant algorithm (see Algorithm 3.3.7 in \cite{Coh93}) and factor it
into irreducibles using the LLL algorithm \cite{lll}. Finally, we approximate
the roots of each irreducible factor to identify the minimal polynomial of
$\alpha+\beta$.  The degree of $\alpha+\beta$ is at most $\deg(\alpha)\deg(\beta)$,
while its height is bounded by $H_{\alpha+\beta}\leq H_{\alpha}^{\deg(\alpha)}
H_{\beta}^{\deg(\beta)}$ \cite{zippel}. Therefore, by (\ref{eq:root separation}),
a polynomial number of bits suffices to describe $\alpha+\beta$ unambiguously.
Similarly, we can compute canonical representations of $\alpha-\beta$,
$\alpha\beta$ and $\alpha/\beta$ in polynomial time using resultants, see
\cite{Coh93}.

To calculate $P(\alpha)$ we repeatedly use addition and multiplication. It
suffices to prove that all intermediate results may be represented in
polynomial space. It is clear that their degrees are at most $\deg(\alpha)$, but
it is not obvious how quickly the coefficients of their minimal polynomials
grow. However, there is a simple reason why their representation is
polynomially bounded. Let $\matr{A}$ be the companion matrix of the minimal polynomial of $\alpha$. Then
$P(\alpha)$ is an eigenvalue of $P(\matr{A})$. We can calculate $P(\matr{A})$ using only
polynomial space. Then from the Leibniz formula
\[
\det(\lambda \matr{I}-P(\matr{A}))=
\sum_{\sigma\in S_n}
\mbox{sign}(\sigma)\prod_{i=1}^n(\lambda \matr{I}-P(\matr{A}))_{i,\sigma(i)},
\]
it is evident that the coefficients of the characteristic polynomial of $P(\matr{A})$
are exponentially large in the length of the input, so their representation
requires only polynomial space. This characteristic polynomial may be factored
into irreducibles in polynomial time, so the description of $P(\alpha)$ and of
all intermediate results is polynomially bounded.
\end{proof}
It is trivial to check whether $\alpha=\beta$ and whether $\alpha$ belongs to
one of $\nat,\zed,\rat$. It takes only polynomial time to
determine whether $\alpha$ is a root of unity, and if so, to calculate its
order and phase.

\subsection{Number fields and ideals}\label{appNumfields}

In this section, we recall some terminology and results from algebraic number
theory. For more details, see \cite{Coh93,stewartntall}.  We also define the
ideal-counting function $v_P$, which is a notion of magnitude of algebraic
numbers distinct from the usual absolute value. We follow the presentation of
\cite{TUCS05}.

An \emph{algebraic number field} is a field extension $\mathbb{K}$ of
$\rat$ which, considered as a $\rat$-vector space, has finite
dimension. This dimension is called the \emph{degree} of the number field and
is denoted by $[\mathbb{K}:\rat]$.  
Given two algebraic numbers $\alpha$ and $\beta$, the \emph{Field Membership Problem} 
is to determine whether $\beta \in \rat(\alpha)$ and, if so, to return a polynomial 
$P$ with rational coefficients such that $\beta=P(\alpha)$. 
This problem can be solved using the LLL algorithm, see~\cite[Section 4.5.4]{Coh93}.

For any number field $\mathbb{K}$, there exists an element
$\theta\in\mathbb{K}$ such that $\mathbb{K}=\rat(\theta)$.  Such a
$\theta$ is called a \emph{primitive element} of $\mathbb{K}$ and satisfies
$\deg(\theta)=[\mathbb{K}:\rat]$. The proof
is constructive: there is always a primitive element for 
$\rat(\alpha_1,\alpha_2)$ of the form $\alpha_1 + l\alpha_2$ for some small
integer $l\leq\deg(\alpha_1)\deg(\alpha_2)$. Thus, repeatedly using an algorithm for the Field Membership
Problem for different $l$ is guaranteed to yield a primitive element for 
$\rat(\alpha_1,\alpha_2)$, and therefore by induction, for any number field
$\mathbb{K}=\rat(\alpha_1,\dots,\alpha_k)$ specified by algebraic numbers
$\alpha_1,\dots,\alpha_k$. 
Also using an algorithm for the Field Membership Problem, one can represent
each $\alpha_j$ as a polynomial in $\theta$ and thereby determine a maximal
$\rat$-linearly independent subset of $\{ \alpha_1,\dots,\alpha_k \}$.

There exist exactly $\deg(\theta)$ monomorphisms from
$\mathbb{K}$ into $\mathbb{C}$, given by $\theta\rightarrow\theta_j$, where
$\theta_j$ are the Galois conjugates of the primitive element $\theta$. If $\alpha\in\mathbb{K}$,
then $\deg(\alpha)|\deg(\theta)$. Moreover, if $\sigma_1,\dots,\sigma_{\deg(\theta)}$
are the monomorphisms from $\mathbb{K}$ into $\mathbb{C}$ then
$\sigma_1(\alpha), \dots,\sigma_{\deg(\theta)}(\alpha)$ are exactly the Galois
conjugates of $\alpha$, each repeated $\deg(\theta)/\deg(\alpha)$ times. The
\emph{norm of} $\alpha$ \emph{relative to} $\mathbb{K}$ is defined as 
\[
{\cal N}_{\mathbb{K}/\rat}(\alpha)=
\prod_{j=1}^{\deg(\theta)}\sigma_j(\alpha)= ({\cal
N}_{\mathit{abs}}(\alpha))^{\deg(\theta)/\deg(\alpha)}
\]

For a number field $\mathbb{K}$, the set $\mathcal{O}_\mathbb{K}=
\mathcal{O}_\alg\cap\mathbb{K}$ of algebraic integers in $\mathbb{K}$
forms a ring under the usual addition and multiplication. The ideals of
$\mathcal{O}_\mathbb{K}$ are finitely generated, and form a commutative ring
under the operations 
\[ IJ=[\{ xy \:|\: x\in I,y\in J \}] \]
\[ I+J=\{ x+y\:|\: x\in I,y\in J\}, \]
with unit $\mathcal{O}_\mathbb{K}$ and zero $\{0\}$, where $[S]$ denotes the ideal generated by $S$. 
An ideal $P$ is \emph{prime} if $P=AB$ implies $A=P$ or $A=[1]$.  The fundamental theorem of
ideal theory states that each non-zero ideal may be represented uniquely (up to
reordering) as a product of prime ideals. 

This theorem gives rise to the following \emph{ideal-counting function}
$v_P:\mathcal{O}_\mathbb{K}\backslash\{0\} \rightarrow\nat$.  For a fixed
prime ideal $P$, we define $v_P(\alpha)$ to be the number of times $P$ appears
in the factorisation into prime ideals of $[\alpha]$. That is, 
\[
v_P(\alpha)=k\mbox{ if and only if }
P^k\mid[\alpha]\mbox{ and }P^{k+1}\nmid[\alpha]
\]
We also define $v_P(0)=\infty$. The function satisfies the following
properties:
\begin{itemize}
\item $v_P(\alpha\beta)=v_P(\alpha)+v_P(\beta)$
\item $v_P(\alpha+\beta)\geq\min\{ v_P(\alpha),v_P(\beta)\}$
\item If $v_P(\alpha)\neq v_P(\beta)$, then $v_P(\alpha+\beta)= \min\{v_P(\alpha),v_P(\beta)\}$.
\end{itemize}
For any $\alpha\in\mathbb{K}$ we can find an algebraic integer
$\beta\in\mathcal{O}_\mathbb{K}$ and a rational integer
$n\in\zed\subseteq\mathcal{O}_\mathbb{K}$ such that $\alpha=\beta/n$. We
extend $v_P$ to $\mathbb{K}$ by defining $v_P(\alpha)=v_P(\beta)-v_P(n)$.  The
first of the three properties of $v_P$ above guarantees that this value is
independent of the choice of $\beta,n$, making the extension of $v_P$ to
$\mathbb{K}$ well-defined. Note that the extension preserves the above three
properties.

For an ideal $I\neq\{0\}$, the quotient ring $\mathcal{O}_\mathbb{K}/I$ is
finite. The \emph{norm} of $I$, denoted ${\cal N}(I)$, is defined as
$|\mathcal{O}_\mathbb{K}/I|$. We define also ${\cal N}([0])=\infty$. Notice
that ${\cal N}(I)=1$ if and only if $I=\mathcal{O}_\mathbb{K}$, otherwise
${\cal N}(I)\geq2$.  Each prime ideal $P$ contains a unique prime number $p$,
and ${\cal N}(P)=p^f$ for some natural number $f\geq1$. In general, 
\[
|{\cal N}_{\mathbb{K}/\rat}(\alpha)|=
{\cal N}([\alpha])\geq2^{v_P(\alpha)}
\]
since ${\cal N}(P)\geq2$ for any prime ideal $P$. Hence,
\[
v_P(\alpha)\leq\log_2|{\cal N}_{\mathbb{K}/\rat}(\alpha)|
\leq\log_2|{\cal N}_{\mathit{abs}}(\alpha)|^d
\]
where $d=[\mathbb{K}:\rat]$. Thus, if we are given
$\mathbb{K}=\rat(\alpha_1,\dots,\alpha_k)$ for canonically represented
algebraic numbers $\alpha_j$ and a canonically represented
$\alpha\in\mathbb{K}$, we can observe that $d$ is at most polynomially large in
the length of the input and $|{\cal N}_{\mathit{abs}}(\alpha)|$ is at most
exponentially large in the length of the input. Therefore, $v_P(\alpha)$ is
only polynomially large.

The following lemma is simple, but useful:
\begin{lemma}\label{lem: exists ideal}
Let $\mathbb{K}$ be a number field and $\alpha\in\mathbb{K}$ with
$\alpha\notin\mathcal{O}_\mathbb{K}$.  Then there exists a prime ideal $P$ of
$\mathcal{O}_\mathbb{K}$ such that $v_P(\alpha)\neq0$.
\end{lemma}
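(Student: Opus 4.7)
The plan is to prove the contrapositive: assume $v_{P}(\alpha)=0$ for every prime ideal $P$ of $\mathbb{K}_{\mathbb{Z}}$, and deduce $\alpha\in\mathbb{K}_{\mathbb{Z}}$. Since $\alpha\in\mathbb{K}$, the remark preceding the lemma lets us write $\alpha=\beta/n$ with $\beta\in\mathbb{K}_{\mathbb{Z}}$ and $n\in\mathbb{Z}\setminus\{0\}\subseteq\mathbb{K}_{\mathbb{Z}}$. The definition $v_{P}(\alpha)=v_{P}(\beta)-v_{P}(n)$ then gives $v_{P}(\beta)=v_{P}(n)$ for every prime ideal $P$.

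Next I would invoke the fundamental theorem of ideal theory, already cited in this appendix: each nonzero ideal of $\mathbb{K}_{\mathbb{Z}}$ has a unique factorisation into prime ideals. Applied to the principal ideals $[\beta]$ and $[n]$, the coincidence of $v_{P}$-values across all $P$ forces
\[
[\beta]=\prod_{P}P^{v_{P}(\beta)}=\prod_{P}P^{v_{P}(n)}=[n].
\]
Since $[\beta]=[n]$ as ideals in the integral domain $\mathbb{K}_{\mathbb{Z}}$, the generators $\beta$ and $n$ are associates: there exists a unit $u\in\mathbb{K}_{\mathbb{Z}}^{*}$ with $\beta=un$. Hence $\alpha=\beta/n=u\in\mathbb{K}_{\mathbb{Z}}$, contradicting the hypothesis $\alpha\notin\mathbb{K}_{\mathbb{Z}}$.

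The lemma is essentially a bookkeeping consequence of the Dedekind machinery developed above, so there is no serious obstacle. The only point requiring any care is the passage from $[\beta]=[n]$ to $\beta/n$ being a unit; this is the standard fact that equal principal ideals in an integral domain have generators differing by a unit. The edge case $\alpha=0$ need not be treated separately, since $0\in\mathbb{K}_{\mathbb{Z}}$ is excluded by the hypothesis (and is also consistent with the convention $v_{P}(0)=\infty\neq0$ adopted earlier).
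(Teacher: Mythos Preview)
Your proof is correct and is essentially the same as the paper's: both write $\alpha=\beta/n$ with $\beta\in\mathbb{K}_{\mathbb{Z}}$ and $n\in\mathbb{Z}$, use unique factorisation of ideals to equate $[\beta]=[n]$ with the coincidence of all $v_{P}$-values, and conclude via the associate-generator fact that $\alpha$ would be a unit of $\mathbb{K}_{\mathbb{Z}}$. The only difference is cosmetic: you argue the contrapositive, whereas the paper runs the implication in the direct direction.
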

\begin{proof}
There exist $\beta\in\mathcal{O}_\mathbb{K}$ and $m\in\zed$ such that
$\alpha=\beta/m$. If $[\beta]=[m]$, then $\beta$ and $m$ are associates, so
$\alpha$ must be a unit of $\mathcal{O}_\mathbb{K}$. Since
$\alpha\notin\mathcal{O}_\mathbb{K}$, it follows that $[\beta]\neq[m]$, so the
factorisations of $[\beta]$ and $[m]$ into prime ideals must differ. Therefore,
$v_P(\beta)\neq v_P(m)$ for some prime ideal $P$, so $v_P(\alpha)\neq0$.
\end{proof}

\subsection{Transcendental number theory}\label{subsec: trnt}

We now move to some techniques from Transcendental Number Theory on
which our results depend in a critical way. 
First, we state a powerful result due to Baker
on linear forms of logarithms of algebraic numbers.
\begin{theorem}\label{thm: bakergeneral}
\cite[Theorem 3.1]{Baker75}
Let $\alpha_1,\ldots,\alpha_m$ be non-zero algebraic numbers
with degrees at most $d$ and heights at most $A$. Further,
let $\beta_0,\dots,\beta_m$ be algebraic numbers with degrees
at most $d$ and heights at most $B\geq2$. Write
\[ \Lambda =\beta_0 + 
\beta_1 \log(\alpha_1) + \ldots + \beta_m \log(\alpha_m) \, . \]
Then either $\Lambda=0$ or $|\Lambda| > B^{-C}$, where $C$ is an
effectively computable number depending only on $m, d, A$ and the
chosen branch of the complex logarithm.
\end{theorem}
Various quantitative versions of this theorem are known with explicit constants,
as well as sharper lower bounds for restricted cases. Of these, in
the present paper we make use of the following result, due to
Baker and W{\"u}stholz, concerning the homogeneous case with
(rational) integer coefficients:
\begin{theorem}\label{thm: bakerwustholz}\cite{BW93}
With the notation as in Theorem \ref{thm: bakergeneral}, 
suppose $\beta_0=0$, $\alpha_1,\dots,\alpha_m\neq 1$,
$\beta_1,\dots,\beta_m\in\zed$ and $A,B\geq e$. Let also
$\log$ be the principal branch of the natural logarithm,
defined by $\log(z) = \log |z|+i\arg(z)$, where $-\pi 
< \arg(z) \leq \pi$. 
Let also $D$ be the degree of the extension field 
$\rats(\alpha_1,\dots,\alpha_m)$ over $\rats$.
Then if $\Lambda\neq0$, then
\[ \log|\Lambda| > -(16mD)^{2(m+2)}(\log(A))^m\log(B). \]
\end{theorem}

The next theorem, due to van der Poorten~\cite{vdp} is analogous to
Baker's bound, but with respect to $P$-adic valuations instead of
the usual Archimedean absolute value.
\begin{theorem}\label{thm: vdPoorten}\cite{vdp}
Let $\alpha_1,\dots,\alpha_m$ be algebraic numbers of degree at most $d$
belonging to a number field $\mathbb{K}$ and with heights at most
$A$. Let $P$ be a prime ideal of $\mathbb{K}$ containing the
rational prime $p$. Let also $\beta_1,\dots,\beta_m$ be rational integers
with absolute values at most $B\geq e^2$. 
If $\alpha_1^{\beta_1}\alpha_2^{\beta_2}\dots\alpha_m^{\beta_m}\neq1$, then
\[
v_P(\alpha_1^{\beta_1}\dots\alpha_m^{\beta_m}-1)
\leq
(16(m+1)d)^{12(m+1)}(p^{d}/\log(p))(\log(A))^m(\log(B))^2.
\]
\end{theorem}

\subsection{Algebraic integers near the unit circle}\label{subsec: blanksby}

Suppose $\alpha\neq 0$ is an algebraic integer. It is easy to see
that it is impossible for all the Galois conjugates of $\alpha$
to be strictly within the unit circle: just notice that the product 
of all Galois conjugates of $\alpha$ must be a non-zero integer by Viete's
laws.
Further, an old result due to Kronecker~\cite{kronecker} establishes that 
unless $\alpha$ is a root of unity, then at least one of its Galois
conjugates must be strictly outside the unit circle.
In this paper, we make use of the following theorem, due to 
Blanksby and Montgomery \cite{BM71}, which strengthens
Kronecker's result by providing an effective separation between this
Galois conjugate and the unit circle.

\begin{theorem}\label{thm: blanksbymontgomery}
Let $\alpha$ be an algebraic integer of degree $d\geq2$.
Then there is a Galois conjugate $\sigma(\alpha)$ of $\alpha$
such that $|\sigma(\alpha)|>1+1/(30d^2\log(6d)).$
\end{theorem}

\subsection{Linear recurrence sequences}\label{subsec: LRSprelims}

We now recall some basic properties of linear recurrence sequences.
For more details, we refer the reader to \cite{everest,TUCS05}. 
Let $\field$ be $\re$ or $\cplx$ throughout this section.
A \emph{linear recurrence sequence (LRS)} over $\field$ is an infinite 
sequence $\lrs{u}$ of terms in $\field$ such that there exists a natural number $k$ and
numbers $a_1,\dots,a_k\in\field$ such that $a_k\neq0$ and $\lrs{u}$ satisfies
the linear recurrence equation
\begin{equation}\label{eq: recurrence}
u_{n+k} = a_1 u_{n+k-1} + a_2 u_{n+k-2} + \dots + a_k u_n.
\end{equation}
The recurrence (\ref{eq: recurrence}) is said to have order $k$.  Note
that the same sequence can satisfy different recurrence relations, but
it satisfies a unique recurrence of minimum order.
The \emph{characteristic polynomial} of the sequence $\lrs{u}$ is 
\[ P(x) = x^k - a_1x^{k-1} - a_2x^{k-2} - \dots - a_k \]
and its roots are called the \emph{characteristic roots} of the sequence.

If $\matr{A}\in\field^{k\times k}$ is a square matrix and 
$\vct{v},\vct{w}\in\field^k$ are column vectors, then it can be shown that 
the sequence $u_n=\vct{v^T}\matr{A}^n\vct{w}$ satisfies
a linear recurrence of order $k$. Indeed,
by the Cayley-Hamilton Theorem, $\matr{A}$ satisfies its own 
characteristic equation $\det(\matr{A}-x\matr{I})=0$,
which gives a recurrence relation on $\lrs{u}$ with coefficients
matching those of the characteristic polynomial $\det(\matr{A}-x\matr{I})$ of $\matr{A}$.
Conversely, any LRS may be expressed in this way. Given a linear recurrence 
relation (\ref{eq: recurrence}), it is sufficient to take $\matr{A}$ to be:
\[
\matr{A} = 
\left[
\begin{array}{ccccc}
a_1 & a_2 & \dots & a_{k-1} & a_k \\
1 & 0 & \dots & 0 & 0 \\
0 & 1 & \dots & 0 & 0 \\
\vdots & \vdots & \ddots & \vdots & \vdots \\
0 & 0 & \dots & 1 & 0
\end{array}
\right].
\]
Then if 
$\vct{w}$ is the vector $(u_{k-1},\dots,u_0)^T$ of initial
terms of $\lrs{u}$ in reverse order and $\vct{v}$ is the unit vector
$(0,\dots,0,1)^T$, we have $u_n = \vct{v^T}\matr{A}^n\vct{w}$.
The characteristic polynomial of the LRS is the characteristic polynomial
of $\matr{A}$, and the characteristic roots of the LRS are precisely 
the eigenvalues of $\matr{A}$.

By converting to the Jordan form, from the matrix expression
$u_n = \vct{v^T}\matr{A}^n\vct{w}$ we can obtain a 
closed-form solution for the $n$-th term of the linear recurrence sequence 
in terms of the eigenvalues $\lambda_1,\dots,\lambda_l$ of $\matr{A}$:
\begin{equation}\label{eq: lrscplx}
u_n = \sum_{j=0}^l P_j(n)\lambda_j^n
\end{equation}
for all $n\geq 0$, where $P_j\in\field[x]$ are univariate polynomials of degree strictly
less than the multiplicity of $\lambda_j$ as a root of the characteristic
polynomial of $\matr{A}$.
In the case $\field=\re$, the set of characteristic roots
is closed under complex conjugation.
Thus, if $\rho_1,\dots,\rho_l\in\re$ are the real roots of $P(x)$ and
$\gamma_1,\overline\gamma_1,\dots,\gamma_m,\overline\gamma_m\in\cplx$ are 
the complex ones, the sequence is given by
\begin{equation}\label{eq: lrsreal}
u_n = \sum_{j=1}^l A_j(n)\rho_j^n + 
\sum_{j=1}^m\left( C_j(n)\gamma_j^n + 
\overline{C_j}(n)\overline{\gamma_j}^n \right)
\end{equation}
for all $n\geq0$, where $A_j\in\re[x]$ and $C_j\in\cplx[x]$.
The coefficients of $P_j$ in (\ref{eq: lrscplx}) and
of $A_j,C_j$ in (\ref{eq: lrsreal}) are algebraic numbers, effectively
computable in polynomial time from the description of the LRS.

A linear recurrence sequence is called \emph{degenerate} if for some
pair of distinct characteristic roots $\lambda_1, \lambda_2$ of its
minimum-order recurrence, the ratio $\lambda_1/\lambda_2$ is a root of
unity, otherwise the sequence is \emph{non-degenerate}. As pointed out
in \cite{everest}, the study of arbitrary LRS can be
reduced effectively to that of non-degenerate LRS by partitioning the original LRS
into finitely many non-degenerate subsequences. Specifically, for a
given degenerate linear recurrence sequence $\lrs{u}$ with characteristic
roots $\lambda_j$ and matrix form $u_n=\vct{v^T}\matr{A}^n\vct{w}$, 
let $L$ be the least common multiple of the orders
of all ratios $\lambda_i/\lambda_j$ which are roots of unity. Then
for each $j\in\{0,\dots,L-1\}$, consider the sequence
\[
u^{(j)}_n = \vct{v^T} \matr{A}^{nL+j} \vct{w} 
= \vct{v^T} (\matr{A}^L)^n (\matr{A}^j\vct{w}).
\] 
Each of these sequences has
characteristic roots $\lambda_i^L$ and is therefore non-degenerate.
Indeed, if $\lambda_1^L$ and $\lambda_2^L$ are distinct eigenvalues of
$\matr{A}^L$, and $(\lambda_1/\lambda_2)^L$ is a $k$-th root of unity for
some $k$, then $\lambda_1/\lambda_2$ is an $Lk$-th root
of unity, so $Lk \, | \, \mbox{order}(\lambda_1/\lambda_2)$. But by the definition
of $L$, we also have $\mbox{order}(\lambda_1/\lambda_2)\,|\,L$, so $k=1$. This
yields $\lambda_1^L = \lambda_2^L$, a contradiction.

From the crude lower bound $\varphi(r)\geq\sqrt{r/2}$
on Euler's totient function, it follows that if $\alpha\in\alg$ has degree $d$ and is a primitive
$r$-th root of unity, then $r\leq2d^2$. 
There are $\pin{\matr{A}}$ ratios $\lambda_i/\lambda_j$ to consider, and
if a ratio is a root of unity then its order is $\pin{\matr{A}}$, so 
it follows that $L=\ein{\matr{A}}$.
Thus, non-degeneracy can be ensured by considering at most exponentially
many subsequences of the original LRS.

\section{ Skolem Problem: introduction}

In the rest of the Appendix, we study the \emph{ Skolem Problem}:
given a linear recurrence sequence (LRS) $\lrs{u}$, determine whether there 
exists a natural number $n$ such that $u_n=0$. The sequence may be real- or
complex-valued, but to make the problem well-defined, we shall require that the
sequence be given in some effective form. For this reason, we take all the linear
recurrence sequences to be over the algebraic numbers, at times 
restricting further to real-valued or rational sequences.

The connection to the Orbit Problem becomes evident
if we recall the matrix representation of linear recurrence sequences. If $\lrs{u}$ is given by
$u_n=\vct{y^T}\matr{A}^n\vct{x}$ for a $d\times d$ matrix $\matr{A}$ and
vectors $\vct{x}$ and $\vct{y}$, then $u_n=0$ if and only if $\matr{A}^n\vct{x}
\in\{\vct{y}\}^\perp$. That is, the orbit of $\vct{x}$ under $\matr{A}$ intersects
the $(d-1)$-dimensional hyperplane $\{\vct{y}\}^\perp$ if and only if the linear
recurrence sequence $\lrs{u}$ of order $d$ contains zero as an element.

The  Skolem Problem has a history dating back to the 1930s, as evidenced
by the celebrated Skolem-Mahler-Lech Theorem, a powerful result which 
characterises the zero sets of linear recurrence sequences: 
\begin{theorem}(Skolem-Mahler-Lech)
Let $\lrs{u}$ be a linear recurrence sequence over a field with characteristic
$0$. Then the zero set of $\lrs{u}$, $Z(u) = \{ n\in\nat : u_n = 0 \}$,
is semilinear, that is, the union
of a finite set and finitely many arithmetic progressions.
\end{theorem} 
This result was originally established in the case of rational LRS in~\cite{skolemSMLthm}, 
then strengthened to include LRS over the algebraic numbers in~\cite{mahlerSMLthm}, 
and finally extended to any field of characteristic 
$0$~\cite{lechSMLthm,MahlerCassels56}. These proofs rely heavily on $p$-adic analysis 
and unfortunately
do not yield constructive methods to compute the zero set of a given linear recurrence,
nor to determine its emptiness. Nonetheless, later work~\cite{BerstelMignotte76}
established an effective procedure to explicitly calculate the arithmetic progressions
mentioned in the theorem for LRS over the rationals. This immediately renders 
decidable the problem of deciding finiteness of the zero set of a rational LRS.
In a similar vein, it is also decidable whether the zero set of a rational LRS 
is equal to $\nat$, and whether it has a finite complement 
\cite[Section II.12]{SalomaaSoittola78}.

Whilst the computation of the infinite component of the zero set
is a significant advancement, no effective method is known to compute 
the finite component or to decide its emptiness. Thus, the decidability 
of the  Skolem Problem remains open and is an outstanding question 
in number theory and theoretical computer science; see, for example,
the exposition of~\cite[Section 3.9]{Tao08}. Efforts towards an upper complexity bound 
have yielded only partial results: decidability for LRS over $\alg$
of order at most $3$ and for LRS over $\ra$ of order $4$ in 
references~\cite{vereshchagin,mignotte}. The decision method
relies crucially on sophisticated results in transcendental number theory,
specifically, Baker's lower bounds on the magnitudes of linear forms in 
logarithms of algebraic numbers and van der Poorten's analogous results 
in the context of $p$-adic valuations. Recently, a proof of decidability 
for LRS of order $5$ was announced in \cite{TUCS05}. However, as pointed 
out in \cite{rp}, the proof incorrectly addresses the case of LRS
of the form:
\[ 
u_n = A\lambda_1^n + \overline{A}\overline{\lambda_1^n} +
B\lambda_2^n + \overline{B}\overline{\lambda_2^n} +
Cr^n,
\]
with one real and four complex characteristic roots with magnitudes satisfying 
$|\lambda_1|=|\lambda_2|>|r|$. Another paper~\cite{Litow97} claims decidability
for all orders, but is also flawed \cite{rp}.

In terms of lower bounds, the strongest known result for the 
Skolem Problem is $\np$-hardness \cite{blondel}. Reference~\cite{Litow97}
claims $\pspace$-hardness, but this has also been shown incorrect~\cite{rp}.

\section{ Skolem Problem: main result and outline}\label{sec: skolem.outline}

The main technical result of the Appendix of this paper is the following: 
\begin{theorem}\label{thm: skolem.bounds}
Let $\lrs{u}$ be a non-degenerate LRS of order $d$ over $\alg$ which is not identically zero
and whose description has size $\len{u}$.
\begin{enumerate}
\item 
If $d=2$, then there exists a bound $N=\pin{u}$ such that 
if $u_n=0$, then $n<N$.
\item 
If $d=3$, then there exists a bound $N=\ein{u}$ such that 
if $u_n=0$, then $n<N$.
\item 
If $d=4$ and $\lrs{u}$ is over $\ra$, then there exists a 
bound $N=\ein{u}$ such that if $u_n=0$, then $n<N$.
\end{enumerate}
\end{theorem}
References \cite{mignotte,vereshchagin} show the existence
of similar bounds, but make no attempt to quantify them in terms of the
description of the input, thereby showing the problems decidable, but obtaining
no more specific complexity upper bound.
Our contribution is to show the bounds are at most exponential
in the size of the input, and in fact, polynomial for LRS of order $2$. This permits
us to obtain the following complexity bounds for the  Skolem Problem for rational LRS:
\begin{theorem}\label{thm: skolem.complexity}
For LRS over $\rat$ of order at most $4$, the  Skolem Problem is in the complexity class
$\nprp$. Further, for LRS over $\rat$ of order $2$, the problem is in $\ptime$.
\end{theorem}

Two points need to be addressed: how to reduce from arbitrary LRS to non-degenerate, non-zero
LRS, and how to obtain the complexity results of Theorem \ref{thm: skolem.complexity}
from the bounds of Theorem \ref{thm: skolem.bounds}.

On the first point, as we showed in Section \ref{subsec: LRSprelims}, the study of arbitrary LRS
can be reduced effectively to the non-degenerate case. This uses the technique of partitioning
a given LRS into $L$ non-degenerate subsequences, where 
\begin{equation}\label{eq: skolem.lcmsize}
L = \mbox{lcm}\{ \mbox{order}(\lambda_i/\lambda_j)\,:\,
\lambda_i,\lambda_j\mbox{ characteristic roots and }
\lambda_i/\lambda_j\mbox{ root of unity } 
\}.
\end{equation}
Specifically, if $\lrs{u}$ is given, then we consider the sequences $\lrs{v^{(j)}}$
defined by $v_n^{(j)}=u_{Ln+j}$ for $j=0,\dots,L-1$. These subsequences are non-degenerate, 
so for the purposes of
showing decidability, non-degeneracy may be assumed without loss of generality.
However, when attempting to establish a more precise complexity upper bound,
the size of $L$ needs to be taken into account.

Recall that if $\alpha$ is an algebraic number of degree $d$ and a 
root of unity of order $r$, then $r\leq 2d^2$. In particular, if
$\lrs{u}$ is an LRS defined by $u_n=\vct{x^T}\matr{A}^n\vct{y}$ described
using $\len{u}=\len{\vct{x}}+\len{\matr{A}}+\len{\vct{y}}$ bits, and 
$\lambda_i,\lambda_j$ are characteristic roots such that $\lambda_i/\lambda_j$
is a root of unity, then $\mbox{order}(\lambda_i/\lambda_j)=\pin{u}$.
Since (\ref{eq: skolem.lcmsize}) takes the least common multiple of
the orders of $\bigoh(\len{u}^2)$ ratios 
$\lambda_i/\lambda_j$ and each order is polynomially large in 
the size of the input, it follows that $L=\ein{u}$. Moreover, this is
not an over-approximation: it is easy to construct LRS where the ratios
$\lambda_i/\lambda_j$ are roots of unity of mutually coprime orders, thereby making
$L$ at least exponential in the size of the input. Thus, for arbitrary
LRS, applying this technique to eliminate non-degeneracy carries an
exponential overhead.

However, in this paper, we restrict our attention to LRS of order at most $4$. 
Therefore, in (\ref{eq: skolem.lcmsize}),
the number of ratios considered is bounded by an absolute constant,
so $L$ is the least common multiple of a fixed number of polynomially large
orders, hence $L=\pin{u}$. 

Furthermore, if the LRS is over $\rats$, then the degree of each characteristic root 
$\lambda_i$ is at most $4$, since we know \emph{a priori} that the characteristic 
polynomial of the sequence has rational coefficients.
Then the degrees of all ratios $\lambda_i/\lambda_j$ are also absolutely bounded, so
$L=\bigoh(1)$. Therefore, in our context of rational LRS
of bounded order, non-degeneracy may be obtained by considering a constant
number of subsequences, whose matrix representation may be computed from that
of $\lrs{u}$ in polynomial time.

Some of the non-degenerate subsequences $\lrs{v^{(j)}}$ could potentially be identically zero, resulting
in the zero set of the original degenerate sequence $\lrs{u}$ containing an entire arithmetic
progression. For each subsequence, we check directly whether this is 
the case by examining its first $d$ terms, and if so, then the problem instance is 
immediately positive. Otherwise, the instance has been reduced to a polynomially-large
(or constant, in the rational case) number of problem instances,
each featuring a non-degenerate LRS which is not the zero sequence.

The second point is how to obtain the complexity upper bounds.
For non-degenerate rational LRS $\lrs{u}$ defined by $u_n=\vct{x^T}\matr{A}^n\vct{y}$, 
let $N$ denote the bound provided by Theorem \ref{thm: skolem.bounds}. If
the sequence is of order $2$, then $N$ is only polynomial in the size of the
input. Thus, we simply calculate $u_n$ for all $n<N$. All intermediate results
are rational numbers, and we only ever raise $\matr{A}$ to a polynomially
large power, so the representation of all intermediate results stays polynomially
bounded. Thus, the $\ptime$ upper bound for LRS of order $2$ follows directly.

For rational LRS of order $3$ or $4$, the bound $N$ is at most exponential in the size 
of the input. We argue the problem is in $\npeqslp$,
where $\eqslp$ is the complete class for the following problem:
given a division-free straight-line program (or equivalently, an arithmetic
circuit) producing an integer $M$, determine whether $M=0$.
Since the bound $N$ is at most exponentially large in the size of the input, 
an $\np$ algorithm can guess the index of a purported zero: $n\in\nat$ with $n<N$. 
Thus, we only need to verify
that $u_n=0$. Direct calculation is not an option, since $n$ is
exponential in the size of the input, whilst the entries of $\matr{A}^n$ are
doubly-exponential in magnitude, requiring an exponential number of
bits to write down. However, we can easily represent the entries of
$\matr{A}^n$ as polynomially-sized arithmetic circuits, using the 
technique of repeated squaring. Then verifying $u_n=0$ reduces to checking
whether a polynomially-large arithmetic circuit evaluates to $0$, which
can be solved by an $\eqslp$ oracle. The bound $\npeqslp$
follows directly. Finally, it is known that
$\eqslp\subseteq\corp$ \cite{schonhage}, so we
also have membership in $\nprp$, as Theorem \ref{thm: skolem.complexity} claims.

Therefore, all that remains is to prove Theorem \ref{thm: skolem.bounds}. We devote
the rest of the Appendix to the technical details of the proof. 
Sections \ref{appSkolem2}, \ref{appSkolem3} and \ref{appSkolem4} address LRS of order
$2$, $3$ and $4$, respectively. Section \ref{appBaker} shows two applications of
Baker's Theorem which are crucially important for orders $3$ and $4$.

\section{LRS of order two}\label{appSkolem2}

In this section, we consider the problem of whether a 
linear recurrence sequence $\lrs{u}$ of order $2$ over $\alg$ contains zero as a term.
The characteristic equation of the recurrence may have one repeated 
root $\theta$, or two distinct roots $\theta_1,\theta_2$. Thus, the
$n$-th term of the sequence is given by one of the following:
\begin{align}
u_n & =(A+Bn)\theta^n & &
\mbox{ (where $A,B,\theta\in\alg$ and $B,\theta\neq 0$)}
\label{eq: skolem2rep} \\
u_n & =A\theta_1^n+B\theta_2^n & &
\mbox{ (where $A,B,\theta_1,\theta_2\in\alg$ and $A,B,\theta_1,\theta_2\neq 0$)}
\label{eq: skolem2simple}
\end{align}
Solving the Skolem Problem for LRS of the form (\ref{eq: skolem2rep}) is trivial: 
simply determine whether the unique root of $A+Bx$ is a natural number.
We therefore concentrate on LRS of the form (\ref{eq: skolem2simple}).
In this case, $u_n=0$ if and only if $(\theta_1/\theta_2)^n=-B/A$. 

Thus, the problem
reduces to the \emph{algebraic number power problem:} decide whether there exists
$n\in\nat$ such that 
\begin{equation}\label{eq: algebraic number power}
\alpha^n=\beta
\end{equation}
for given $\alpha,\beta\in\alg$. The assumption of non-degeneracy of $\lrs{u}$
allows us to assume $\alpha$ is not a root of unity\footnote{Notice in passing 
that if $\alpha$ is a root of unity,
then the algebraic number power problem is easy to decide: simply determine whether $\beta$ 
is an $r$-th root of unity, where $r$ is the order of $\alpha$.
If this is indeed the case, however, then there exists no bound of the kind promised by Theorem
\ref{thm: skolem.bounds}, since $\alpha^n=\beta$ holds periodically.}. 
The algebraic number power problem is decidable \cite{TUCS05}. 
Reference \cite{orbit} proved a polynomial bound
on $n$ when $\beta$ has the form $P(\alpha)$ for a given $P\in\rat[x]$. 
We give a brief recapitulation of the decidability proof of \cite{TUCS05} 
and sharpen it to extract a polynomial bound on $n$.

%%%%%%

\begin{lemma}\label{lem: algebraic number power problem}
Suppose $\alpha,\beta\in\alg$. If $\alpha$ is not a root of unity, then
there exists a bound $N$ such that if (\ref{eq: algebraic number
power}) holds, then $n<N$. Moreover, $N=\pin{I}$, where 
$\len{I} = \len{\alpha} +\len{\beta}$ is the length of the input.
\end{lemma}

\begin{proof}
Let $\mathbb{K}=\mathbb{Q}(\alpha,\beta)$. If $\alpha$ is not an algebraic
integer, then by Lemma \ref{lem: exists ideal} there exists a prime ideal $P$
in the ring $\mathcal{O}_\mathbb{K}$ such that $v_P(\alpha)\neq0$. Then if
$\alpha^n=\beta$, we have 
\[ v_P(\alpha^n)=nv_P(\alpha)=v_P(\beta). \]
If $v_P(\alpha)$ and $v_P(\beta)$ have different signs, then we are done.
Otherwise, 
\[
n=\frac{v_P(\beta)}{v_P(\alpha)}\leq|v_P(\beta)|
\leq\log_2|{\cal N}_{\mathbb{K}/\mathbb{Q}}(\beta)|
\leq\log_2|{\cal N}_{\mathit{abs}}(\beta)|^d,
\]
where $d=[\mathbb{Q}(\alpha,\beta):\mathbb{Q}]$ is at most polynomially large
in $\Vert\alpha\Vert +\Vert\beta\Vert$.  It follows that the bound on $n$ is
polynomially large in the length of the input.

Suppose therefore that $\alpha$ is an algebraic integer. It is not a root of unity
by the premise of the Lemma, so by Theorem \ref{thm: blanksbymontgomery}
(Blanksby and Montgomery),
$\alpha$ has a Galois conjugate
$\sigma(\alpha)$ such that 
\[ |\sigma(\alpha)|>1+\frac{1}{30d^2\log(6d)}, \]
where $d$ is the degree of $\alpha$.
This implies 
\[ \frac{1}{\log|\sigma(\alpha)|}<60d^2\log(6d). \]
Recall that $\sigma$ is a monomorphism on $\mathbb{K}$, so 
$\sigma(\alpha^n) = (\sigma(\alpha))^n$.
Then if $\alpha^n=\beta$, we have 
\[
n=\frac{\log|\sigma(\beta)|}{\log|\sigma(\alpha)|}
<\log|\sigma(\beta)|60d^2\log(6d).
\]
Observe that if we are given canonical descriptions of $\alpha$ and $\beta$,
then $60d^2\log(6d)$ is at most polynomially large in
$\Vert\alpha\Vert$, and $\log|\sigma(\beta)|$ is at most polynomially large in
$\Vert\beta\Vert$. It follows that the bound on $n$ is polynomial in the length
of the input. 
\end{proof}

%%%%%%

\section{Application of Baker's Theorem}\label{appBaker}

Before we proceed to LRS of order $3$ and $4$, we make a brief
diversion to show two pertinent applications of Baker's Theorem. They
essentially capture the technically difficult core of the Skolem
Problem for LRS of order $3$ or $4$, so for clarity, they are
exhibited here first, prior to their use in the context of LRS.

The first application concerns powers $\lambda^n$ ($n\in\nat$) of an algebraic 
number $\lambda$ on the unit circle. We show that for large $n$ and any fixed
$b\in\alg$, the distance $|\lambda^n-b|$ cannot be `too small', unless $\lambda$
is a root of unity.

\begin{lemma}\label{lem: baker application}
Let $\lambda,b\in\alg$, where $|\lambda|=1$ and $\lambda$ is not a root
of unity. Suppose $\phi(n)$ is a function from $\nat$ to $\cplx$ for
which there exist $a,\chi\in\rat$ such that $\chi\in(0,1)$ and $|\phi(n)|\leq
a\chi^n$. There exists a bound $N$ such that if
\begin{equation}\label{eq: baker situation}
\lambda^n=\phi(n)+b,
\end{equation}
then $n<N$. Moreover, $N$ is at most exponential in the length of the input
$\len{I}=\len{\lambda}+\len{b}+\len{a}+\len{\chi}$.
\end{lemma}

\begin{proof}
The left-hand side of (\ref{eq: baker situation}) describes points on the unit
circle, whereas the right-hand side tends to $b$. If $|b|\neq1$, then for $n$
large enough, the right-hand side of (\ref{eq: baker situation}) will always be
off the unit circle.  This happens when 
\[ n>\frac{\log(||b|-1|/a)}{\log(\chi)}. \]

The difficult case is when $b$ is on the unit circle. Here we will use Baker's
Theorem to derive a bound on $n$. Consider the angle $\Lambda$ between
$\lambda^n$ and $b$. Since $\lambda$ is not a root of unity, 
by Lemma~\ref{lem: algebraic number power problem}, this 
angle can be zero for at most one value of $n$, which is polynomially large in $\len{I}$.
Otherwise, write
\[ \Lambda=\log\frac{\lambda^n}{b}=n\log(\lambda)-\log(b)+ 2k_n\log(-1)\neq0, \]
where $k_n$ is an integer chosen so that $\Lambda=i\tau$ for some
$\tau\in[0,2\pi)$. Then $2n$ is an upper bound on the height of the
coefficients in front of the logarithms (because $k_n\leq n$), $H=\max\{
H_{\lambda},H_b,3\}$ is a height bound for the arguments to the logarithms and
$d=\max\{\deg(\lambda),\deg(b)\}$ is a bound on the degrees. 
Then by Theorem \ref{thm: bakerwustholz} (Baker-W\"ustholz), we have
\[ \log|\Lambda|>-(48d)^{10}\log^2H\log(2n), \]
which is equivalent to
\[ |\Lambda|>(2n)^{-(48d)^{10}\log^2H}. \]
This is a lower bound on the length of the arc between $\lambda^n$
and $b$. The length of the chord is at least half of the bound: 
$|\lambda^n-b|\geq|\Lambda|/2$.
So in the equation $\lambda^n-b=\phi(n)$, the left-hand side is bounded below
by an inverse polynomial in $n$. However, the right-hand side shrinks
exponentially quickly in $n$. For all $n$ large enough, the right-hand side will
be smaller in magnitude than the left-hand side. 

We will now quantify the bound on $n$. Let $p_1=(48d)^{10}\log^2H$ and
$p_2=2$. Observe that $p_1,p_2=\pin{I}$.
Then (\ref{eq: baker situation}) cannot hold if 
\[ \frac{1}{2}(p_2n)^{-p_1}\geq a\chi^n, \]
which is equivalent to 
\[ -\log(2)-\log(a)-p_1\log(p_2)-p_1\log(n)\geq n\log(\chi). \]
Define $p_3=\log(2)+\log(a)+p_1\log(p_2)$ and $p_4=\max\{p_3,p_1\}=\pin{I}$. 
Then it suffices to have
\[ \frac{p_4}{-\log(\chi)}\leq\frac{n}{1+\log(n)}, \]
which is guaranteed by 
\[ \sqrt{n}\geq\frac{p_4}{-\log(\chi)}. \]
Observe that $-1/\log(\chi)$ is at most exponentially large in $\Vert\chi\Vert$.
Therefore, the bound on $n$ is exponential in the size of the input.
\end{proof}

%%%%%%

Continuing in the same line, we next consider two algebraic numbers, 
$\lambda_1$ and $\lambda_2$, whose powers define discrete trajectories 
embedded in two circles in the complex plane:
$a\lambda_1^n$ and $b\lambda_2^n + c$ as $n$ varies over $\nat$. 
The following lemma shows that unless $\lambda_1,\lambda_2$ are roots of
unity, then for large $n$, the $n$-th points of the two trajectories are 
never `too close' to each other.

\begin{lemma}\label{lem: two intersecting circles}
Suppose $\lambda_1,\lambda_2,a,b,c\in\alg$ are non-zero, where
$|\lambda_1|=|\lambda_2|=1$ and $\lambda_1,\lambda_2$ are not roots of unity.
Let $\phi(n)$ be a function from $\nat$ to $\cplx$ such that
$0<|\phi(n)|\leq w\chi^n$ for some $w,\chi\in\rat$, $\chi\in(0,1)$. Then
there exists a bound $N$ such that if 
\begin{equation}\label{eq: two intersecting circles}
a\lambda_1^n=b\lambda_2^n+c+\phi(n),
\end{equation}
then $n<N$. Moreover, $N=\ein{I}$, where $\len{I}=\len{\lambda_1}+
\len{\lambda_2}+\len{a}+\len{b}+\len{c} + \len{w} + \len{\chi}$.
\end{lemma}

\begin{proof}
Multiplying the equation by $\overline{c}/|c||a|$ allows us to assume that
$|a|=1$ and $c\in\re^+$. 

Let $f(n)=a\lambda_1^n$, $g(n)=b\lambda_2^n+c$.  It is clear that $f(n)$
describes points on the unit circle~${\cal O}_1$, whilst $g(n)$ describes
points on the circle ${\cal O}_2$ with centre $c$ on the real line and 
radius~$|b|$. 

If these circles do not intersect, then for $n$ large enough, $|\phi(n)|$ will
be forever smaller than the smallest distance between the circles.  This
happens when 
\[ n>\frac{\log(c-|b|-1)-\log(w)}{\log(\chi)}, \]
which is an exponential lower bound on $n$ in the size of the input.

Suppose now the circles intersect in two points, $z_1$ and $z_2$.  Let $\mathcal{L}_1$ be
the horizontal line through $z_1$ and $\mathcal{L}_2$ the horizontal line through $z_2$.
Let $\mathcal{L}_1\cap\mathcal{O}_1=\{ x_1,z_1\}$, $\mathcal{L}_1\cap\mathcal{O}_2=\{y_1,z_1\}$,
$\mathcal{L}_2\cap\mathcal{O}_1=\{x_2,z_2\}$ and $\mathcal{L}_2\cap\mathcal{O}_2=\{y_2,z_2\} $. It is
trivial that $z_2=\overline{z_1}$, $x_2=\overline{x_1}$, $y_2=\overline{y_1}$.

\begin{tikzpicture} [scale=1.6]
\coordinate (c) at (2.8, 0);
\draw [->, thick, name path=yaxis] (0, -1.5) -- (0, 1.5);
\draw [->, thick, name path=xaxis] (-1.3, 0) -- (6, 0);
\draw [thick, name path=O1] (0, 0) circle [radius=1];
\draw [thick, name path=O2] (c) circle [radius=2];

\path [name intersections={of=O1 and O2}] ;

\coordinate (z1) at (intersection-1) {};
\coordinate (z2) at (intersection-2) {};
\node [label={[label distance=0.1]45:{\small $z_1$}}] at (z1) {};
\node [label={[label distance=0.1]315:{\small $z_2$}}] at (z2) {};

\draw let \p1 = (z1) in 
[thick, green, name path=L1] (-1.3,\y1) -- (6, \y1);
\draw let \p2 = (z2) in
[thick, green, name path=L2] (-1.3,\y2) -- (6, \y2);

\path [name intersections={of=L1 and O1}] ;
\coordinate (x1) at (intersection-2); 
\node [label={[label distance=0.1]135:{\small $x_1$}}] at (x1) {}; 

\path [name intersections={of=L1 and O2}] ;
\coordinate (y1) at (intersection-1);
\node [label={[label distance=0.1]45:{\small $y_1$}}] at (y1) {};

\path [name intersections={of=L2 and O1}] ;
\coordinate (x2) at (intersection-1) ;
\node [label={[label distance=0.1]225:{\small $x_2$}}] at (x2) {};

\path [name intersections={of=L2 and O2}] ;
\coordinate (y2) at (intersection-2) ;
\node [label={[label distance=0.1]315:{\small $y_2$}}] at (y2) {};

\fill (z1) circle [radius=0.04] ;
\fill (z2) circle [radius=0.04] ;
\fill (x1) circle [radius=0.04] ;
\fill (x2) circle [radius=0.04] ;
\fill (y1) circle [radius=0.04] ;
\fill (y2) circle [radius=0.04] ;
\fill (c) circle [radius=0.04] ;

\node [above] at (2.8, 0) {\small $c$} ; 

\path [name path=proj1] (c) -- ($(x1)!(c)!(y1)$) ;
\path [name intersections={of=proj1 and L1}] ;
\node [above] at (intersection-1) {$\mathcal{L}_1$} ;

\path [name path=proj1] (c) -- ($(x2)!(c)!(y2)$) ;
\path [name intersections={of=proj1 and L2}] ;
\node [below] at (intersection-1) {$\mathcal{L}_2$} ;

\coordinate (O1label) at (1.3,1.7) ;
\node at (O1label) {$\mathcal{O}_2$} ;
\coordinate (O2label) at (-0.7, 1) ;
\node at (O2label) {$\mathcal{O}_1$} ;
\end{tikzpicture}

We first argue that for $n$ large enough, (\ref{eq: two intersecting circles})
can hold only if for some intersection point $z_i$, $\Re(z_i)$ lies between
$\Re(f(n))$ and $\Re(g(n))$, or $\Im(z_i)$ lies between $\Im(f(n))$ and $\Im(g(n))$.
This can only be violated in two symmetric situations: either 
\begin{enumerate}
\item $f(n)$ is on the
arc $z_1z_2$ of ${\cal O}_1$ which lies inside ${\cal O}_2$ and $g(n)$ is on
the arc $y_1y_2$ of ${\cal O}_2$ which lies outside ${\cal O}_1$, or 
\item $f(n)$ is
on the arc $x_1x_2$ of ${\cal O}_1$ which lies outside ${\cal O}_2$ and $g(n)$
is on the arc $z_1z_2$ of ${\cal O}_2$ which lies inside ${\cal O}_1$. 
\end{enumerate}
In the first situation, when $g(n)$ is on the arc $y_1y_2$ of ${\cal O}_2$ outside
${\cal O}_1$, we have
\[ |f(n)-g(n)|\geq|g(n)|-1\geq|y_1|-1. \]
Since the point $y_1$ is strictly to the right of $1$ on the complex plane,
this lower bound is positive, and moreover it is independent of $n$, so
(\ref{eq: two intersecting circles}) cannot hold for $n$ large enough because $\phi(n)$ tends to zero
exponentially quickly. In particular, (\ref{eq: two intersecting circles}) does not
hold if
\[ n>\frac{\log(|y_1|-1)-\log(w)}{\log(\chi)}, \]
which is exponentially large in the size of the input. The second situation is
analogous.

Therefore, we can assume that one of the intersection points $z_i$ separates
$f(n)$ and $g(n)$ horizontally or vertically in the figure. That is, $z_i$
satisfies $\Re(f(n))\leq \Re(z_i) \leq \Re(g(n))$ or $\Im(f(n))\leq \Im(z_i) \leq
\Im(g(n))$.  We will show a lower bound on $|f(n)-g(n)|$ which shrinks slower
than exponentially. The real (horizontal) and imaginary (vertical) cases are
completely analogous. We show the working for the real case. Assume that
$\Re(z_i)$ lies between $\Re(f(n))$ and $\Re(g(n))$. Clearly, 
\[ |f(n)-g(n)|\geq|\Re(g(n)-f(n))|=|\Re(z_i-f(n))|+|\Re(g(n)-z_i)|. \]
Let $\alpha=\arg(\lambda_1)$, $\gamma=\arg(a)$ and $\beta=\arg(z_i)$. Then 
\[
|\Re(z_i-f(n))|=|\cos(n\alpha+\gamma)-\cos(\beta)|
=2\left|\sin\frac{\beta-n\alpha-\gamma}{2}
\sin\frac{\beta+n\alpha+\gamma}{2}\right|.
\]
Let $u_n,v_n$ be appropriately chosen integers so that
\[ 
\frac{\beta-n\alpha-\gamma}{2}+u_n\pi\in
\left[-\frac{\pi}{2},\frac{\pi}{2}\right],
\]
\[ 
\frac{\beta+n\alpha+\gamma}{2}+v_n\pi\in
\left[-\frac{\pi}{2},\frac{\pi}{2}\right].
\]
Then using the inequality
\[ 
|\sin(x)|\geq\frac{|x|}{\pi}
\mbox{ for }x\in\left[-\frac{\pi}{2},\frac{\pi}{2}\right],
\]
we have
\[
\left|\sin\frac{\beta-n\alpha-\gamma}{2}\right|
\geq\frac{1}{\pi}\left|\frac{\beta-n\alpha-\gamma}{2}+\pi u_n\right|,
\]
\[
\left|\sin\frac{\beta+n\alpha+\gamma}{2}\right|\geq
\frac{1}{\pi}\left|\frac{\beta+n\alpha+\gamma}{2}+\pi v_{n}\right|.
\]
Both of these expressions are sums of logarithms of algebraic numbers, 
non-zero for $n$ exceeding a polynomially large bound in $\len{I}$ by
Lemma~\ref{lem: algebraic number power problem},
so we can give lower bounds for them using Theorem~\ref{thm: bakerwustholz}
(Baker-W\"ustholz) as in Lemma~\ref{lem: baker application}: 
\[ |\Re(z_i-f(n))|\geq(p_1n)^{-p_2} \]
for some $p_1,p_2>0$ which are independent of $n$ and at most polynomially
large in the input. A similar lower bound holds for 
$|\Re(g(n)-z_i)|$.
If $\delta=\arg(\lambda_2)$, $\eta=\arg(b)$ and $\theta=\arg(z_i-c)$, we have
\[ |\Re(g(n)-z_i)|=|b|(\cos(n\delta+\eta)-\cos(\theta))\geq(p_3n)^{-p_4}, \]
where $p_3,p_4>0$ are independent of $n$ and have at most polynomial size in
the input. Hence we have 
\[ |f(n)-g(n)|\geq2(p_5n)^{-p_6}, \]
where $p_5=\max\{p_1,p_3\}$ and $p_6=\max\{p_2,p_4\}$.  Since $\phi(n)$ shrinks
exponentially quickly, a bound on $n$ follows past which (\ref{eq: two
intersecting circles}) cannot hold. In the manner of Lemma \ref{lem: baker
application}, we can show that this bound is exponentially large in the size of
the input.  The vertical case is analogous, except that considering imaginary
parts gives sines instead of cosines, so we shift all angles by $\pi/2$ and
proceed as above. If the circles are tangent and neither lies inside the other,
then the intersection point separates $f(n)$ and $g(n)$ horizontally, so we are
done by the above analysis. 

Finally, suppose that the circles are tangent and one lies inside the other:
$|b|+c=1$. The argument of $f(n)$ is $\gamma+n\alpha$. By the law of cosines
applied to the triangle with vertices $f(n)$ and the centres of the circles, we
have 
\[ |f(n)-c|^2=c^2+1-2c\cos(\gamma+n\alpha). \]
Therefore, the shortest distance from $f(n)$ to a point on ${\cal O}_2$ is 
\[ h(n)=\sqrt{c^2+1-2c\cos(\gamma+n\alpha)}-(1-c). \]
Let $A(n)=\sqrt{c^2+1-2c\cos(\gamma+n\alpha)}$ and $B=1-c$. Since $A\leq1+c$,
we have $A+B\leq2$, so 
\[ h(n)=A-B=\frac{A^2-B^2}{A+B}\geq c(1-\cos(\gamma+n\alpha)) \]
Let $k_n$ be an integer, so that 
\[ \gamma+n\alpha+k_n2\pi\in[-\pi,\pi). \]
By Lemma~\ref{lem: algebraic number power problem}, this is zero for at most 
one, polynomially large in $\len{I}$, value of $n$. For larger $n$,
a lower bound on this angle follows from Theorem \ref{thm: bakerwustholz} (Baker-W\"ustholz):
\[ |\gamma+n\alpha+k_n2\pi|\geq(p_7n)^{-p_8} \]
for some constants $p_7,p_8>0$ which are polynomially large in the input. Then 
\[ \cos(\gamma+n\alpha)\leq\cos((p_7n)^{-p_8}), \]
so
\[ h(n)\geq c(1-\cos((p_7n)^{-p_8})). \]
From the Taylor expansion of $\cos(x)$, it follows easily that 
\[ 1-\cos(x)\geq\frac{11}{24}x^2\mbox{ for }x\leq1. \]
Since $p_7,p_8\geq1$, we have $(p_7n)^{-p_8}\leq1$.  Therefore, 
\[ h(n)\geq c\frac{11}{24}(p_7n)^{-2p_8}. \]
This lower bound on $h(n)$ shrinks inverse-polynomially as $n$ grows. Recall
that $h(n)$ is the smallest distance from $f(n)$ to ${\cal O}_2$. It follows
that for $n$ large enough, $|\phi(n)|<h(n)$ forever, so $f(n)=g(n)+\phi(n)$
cannot hold.  In the manner of Lemma \ref{lem: baker application}, we can show
that the bound on $n$ is exponentially large in the input.
\end{proof}

%%%%%%

\section{LRS of order three}\label{appSkolem3}

We now move to the problem of determining whether a 
linear recurrence sequence $\lrs{u}$ of order $3$ over $\alg$ contains zero as an element.
The characteristic equation of such a sequence may have
either three distinct (real or complex) roots, or one 
repeated real root and one simple real root, or one real root of multiplicity $3$.
Thus, the $n$-th element of the sequence is given by one of the following:
\begin{align}
u_n & =A\alpha^n+B\beta^n+C\gamma^n & &
\mbox{ (where $A,B,C,\alpha,\beta,\gamma\in\alg$ are all non-zero)}
\label{eq: skolem.rec111} \\
u_n & =(A+Bn)\alpha^n+C\beta^n & &
\mbox{ (where $A,B,C,\alpha,\beta\in\alg$ with $B,C,\alpha,\beta\neq0$)}
\label{eq: skolem.rec21} \\
u_n & = (Cn^2 + Bn + A)\alpha^n & &
\mbox{ (where $A,B,C,\alpha\in\alg$ with $C,\alpha\neq0$)}
\label{eq: skolem.rec3}
\end{align}
Finding the zeros of LRS of the form (\ref{eq: skolem.rec3}) is trivial:
simply check whether the quadratic $Cn^2 + Bn + A$ has roots which are
natural numbers. Thus, we focus on the remaining two possibilities.
We will consider only non-degenerate sequences: the ratios of the roots
$\alpha,\beta,\gamma$ are not roots of unity.

First we consider $\lrs{u}$ given by (\ref{eq: skolem.rec111}).
Notice that $A,B,C,\alpha,\beta,\gamma$ are all non-zero, otherwise 
the sequence satisfies a recurrence relation of lower order. 
Thus, we can rearrange $u_n=0$ to obtain:
\begin{equation}\label{eq: skolem 3 rearranged}
\left(\frac{\beta}{\alpha}\right)^{n}
=-\frac{C}{B}\left(\frac{\gamma}{\alpha}\right)^{n}-\frac{A}{B}.
\end{equation}
Assume without loss of generality $|\alpha|\geq|\beta|\geq|\gamma|$.  
In Lemmas \ref{lem: skolem 3, one dominant root}, \ref{lem:
skolem 3, two dominant roots}, \ref{lem: skolem 3, three dominant roots} below,
we consider separately the cases $|\alpha|>|\beta|$, $|\alpha|=|\beta|>|\gamma|$
and $|\alpha|=|\beta|=|\gamma|$, and obtain a bound on $n$ which is exponential
in the length of the description of the sequence and beyond which $u_n=0$ cannot
hold.

%%%%%%

\begin{lemma}\label{lem: skolem 3, one dominant root}
Suppose $\lrs{u}$ is given by (\ref{eq: skolem.rec111}).
If $|\alpha|>|\beta|$, then there exists a bound $N$ such that if
$u_n=0$, then $n<N$. Moreover, $N=\ein{I}$, where $\len{I}$ is the
length of the input $\len{A}+\len{B}+\len{C}+\len{\alpha}+\len{\beta}+\len{\gamma}$.
\end{lemma}
\begin{proof}
This follows straightforwardly from the dominance of $\alpha$. If 
\[
n>\max\left\{ \frac{\log|A/2B|}{\log|\beta/\alpha|},
\frac{\log|A/2C|}{\log|\gamma/\alpha|}\right\},
\]
then 
\[
\left|
-\frac{B}{A}\left(\frac{\beta}{\alpha}\right)^n
-\frac{C}{A}\left(\frac{\gamma}{\alpha}\right)^n
\right|
\leq
\left|\frac{B}{A}\left(\frac{\beta}{\alpha}\right)^n\right|
+
\left|\frac{C}{A}\left(\frac{\gamma}{\alpha}\right)^n\right|<
\frac{1}{2}+\frac{1}{2}=1.
\]
\end{proof}

%%%%%%

\begin{lemma}\label{lem: skolem 3, two dominant roots}
Suppose $\lrs{u}$ is given by (\ref{eq: skolem.rec111}).
If $|\alpha|=|\beta|>|\gamma|$, then there exists a bound $N$ such that if
$u_n=0$, then $n<N$. Moreover, $N=\ein{I}$, where $\len{I}$ is the
length of the input $\len{A}+\len{B}+\len{C}+\len{\alpha}+\len{\beta}+\len{\gamma}$.
\end{lemma}
\begin{proof}
This is a direct application of Lemma \ref{lem: baker application} to equation
(\ref{eq: skolem 3 rearranged}). 
\end{proof}

%%%%%%

\begin{lemma}\label{lem: skolem 3, three dominant roots} 
Suppose $\lrs{u}$ is given by (\ref{eq: skolem.rec111}).
If $|\alpha|=|\beta|=|\gamma|$, there exist at most two values of $n$ such
that $u_n=0$. Moreover, they are at most exponential in the length of the input 
$\len{A}+\len{B}+\len{C}+\len{\alpha}+\len{\beta}+\len{\gamma}$
and are computable in polynomial time.
\end{lemma}
\begin{proof}
The left-hand side of (\ref{eq: skolem 3 rearranged}) as a function of $n$
describes points on the unit circle in the complex plane, whereas the
right-hand side describes points on a circle centred at $-A/B$ with radius
$|C/B|$. Note these circles do not coincide, because $A\neq0$. We can obtain
their equations and compute their intersection point(s). If they do not
intersect, then equation (\ref{eq: skolem 3 rearranged}) can never hold. Otherwise, the
equation can only hold if the two sides are simultaneously equal to the same
intersection point. For each of the (at most two) intersection points $\theta$,
let 
\[
S_1=\left\{ n\:\left|\:
\left(\frac{\beta}{\alpha}\right)^n=\theta\right.\right\},
\]
\[
S_2=\left\{ n\:\left|\:
-\frac{C}{B}\left(\frac{\gamma}{\alpha}\right)^n
-\frac{A}{B}=\theta\right.\right\}.
\]
Observe that $|S_i|\leq1$, because $\beta/\alpha$ and $\gamma/\alpha$ are not
roots of unity. We compute $S_1$ and $S_2$ from the bound in Lemma \ref{lem:
algebraic number power problem} and check whether $S_1\cap S_2$ is non-empty.
\end{proof}

%%%%%%

Next, we consider LRS of the form (\ref{eq: skolem.rec21}).
We will assume that $B,C,\alpha,\beta$ are all non-zero, otherwise the sequence
satisfies a linear recurrence of lower order.

\begin{lemma}\label{lem: skolem 3, one simple and one repeated}
Suppose $\lrs{u}$ is given by (\ref{eq: skolem.rec21}).
There exists a bound $N$ such that if $u_n=0$,
then $n<N$. Moreover, $N=\ein{I}$, where $\len{I}$ is
the length of the input $\len{A}+\len{B}+\len{C}+\len{\alpha}+\len{\beta}$.
\end{lemma}
\begin{proof}
We wish to solve for $n\in\nat$ the equation:
\begin{equation}\label{eq: skolem 3, repeated root}
(A+Bn)\alpha^n+C\beta^n=0.
\end{equation}
If $|\alpha|\geq|\beta|$, then for 
\[ n>\frac{|A|+|C|}{|B|}, \]
we have 
\[ |C|<|B|n-|A|\leq|A+Bn|, \]
so
\[ |C\beta^n|<|(A+Bn)\alpha^n|, \]
therefore (\ref{eq: skolem 3, repeated root}) cannot hold.  Now suppose
$|\alpha|>|\beta|$ and rewrite (\ref{eq: skolem 3, repeated root}) as 
\[ \frac{A+Bn}{C}=-\left(\frac{\beta}{\alpha}\right)^n. \]
Equation (\ref{eq: skolem 3, repeated root}) implies
\[
\left|\frac{\beta}{\alpha}\right|^n=\left|\frac{A+Bn}{C}\right|
\leq \left|\frac{A}{C}\right| + \left|\frac{B}{C}\right|n.
\]
However, we will show that for all $n$ large enough, this fails to hold.
Indeed, the inequality
\[
\left|\frac{\beta}{\alpha}\right|^{n}
>
\left|\frac{A}{C}\right|+\left|\frac{B}{C}\right|n
\]
is implied by 
\[ d\left(n+1\right)<\left|\frac{\beta}{\alpha}\right|^n, \]
where $d=\max\{|A/C|,|B/C|\}$.
Taking logarithms, we see that it suffices to have 
\[ \frac{n}{1+\log(n+1)}>\frac{f}{\log|\beta/\alpha|}, \]
where $f=\max\{\log(d),1\}$. Noting that $1+\log(n+1)<2\sqrt{n}$ for all
$n\geq1$, we see that it suffices to have 
\[ n>4f^2/\log^2|\beta/\alpha| \]
to guarantee that (\ref{eq: skolem 3, repeated root}) cannot hold.  This is an
exponential bound on $n$ in the length of the input. 
\end{proof}

\section{LRS of order four}\label{appSkolem4}

We now proceed to the problem of determining whether a 
linear recurrence sequence $\lrs{u}$ of order $4$ over $\alg$ contains zero as an element.
As before, we assume non-degeneracy of the sequence.
Depending on the roots of the characteristic
polynomial, the $n$-th term of the sequence is given by one of the following
(where $A,B,C,D,\alpha,\beta,\gamma,\delta$ are algebraic):
\begin{align}
u_n & =A\alpha^n+B\beta^n+C\gamma^n+D\delta^n & &
\mbox{ (where $A,B,C,D\neq0$)}
\label{eq: skolem.rec1111} \\
u_n & =(A+Bn)\alpha^n+C\beta^n+D\gamma^n & &
\mbox{ (where $B,C,D\neq0$)}
\label{eq: skolem.rec211} \\
u_n & =(A+Bn)\alpha^n+(C+Dn)\beta^n & &
\mbox{ (where $B,D\neq0$)}
\label{eq: skolem.rec22} \\
u_n & =(A+Bn+Cn^2)\alpha^n+D\beta^n & &
\mbox{ (where $C,D\neq0$)}
\label{eq: skolem.rec31} \\
u_n & =(A+Bn+Cn^2+Dn^3)\alpha^n & &
\mbox{ (where $D\neq0$)}
\label{eq: skolem.rec4}
\end{align}

Solving $u_n=0$ in the case of $\lrs{u}$ given by (\ref{eq: skolem.rec4})
is trivial: just calculate canonical descriptions of the roots of 
$A+Bx+Cx^2+Dx^3$ and check whether any are natural numbers.

In the case of $\lrs{u}$ given by (\ref{eq: skolem.rec31}),
rearrange $u_n=0$ as 
\[ (A+Bn+Cn^2)\left(\frac{\alpha}{\beta}\right)^n=-D. \]
The left-hand side tends to $0$ or $\infty$ in magnitude, depending on whether
$|\alpha|<|\beta|$. In both cases, since $C,D\neq0$, a bound on $n$
follows which is at most exponential in the size of the input. 

The remaining three cases, where $\lrs{u}$ is of the form (\ref{eq: skolem.rec1111}),
(\ref{eq: skolem.rec211}) or (\ref{eq: skolem.rec22}) are more involved. They are
the subject of Lemmas \ref{lem: skolem 4, 22}, \ref{lem: skolem 4, 211}, 
\ref{lem: skolem 4, 1111, not all same magnitude} and 
\ref{lem: skolem 4, 1111, same magnitude}, which show the existence of 
a bound $N$ which is at most exponentially large in the size of the input
and beyond which $u_n=0$ cannot hold.

Note that for complex algebraic LRS given by (\ref{eq: skolem.rec1111}) with 
characteristic roots all of the same magnitude, the  Skolem Problem is 
not known to be decidable.
Thus, our final technical result, Lemma \ref{lem: skolem 4, 1111, same magnitude} 
will require the simplifying assumption that $u_n\in\ra$ for all~$n$. This 
is the only reason why Theorem \ref{thm: skolem.bounds} insists that LRS of order $4$
be real algebraic. In all other cases, as shown by Lemmas \ref{lem: skolem 4, 22}, 
\ref{lem: skolem 4, 211} and \ref{lem: skolem 4, 1111, not all same magnitude},
an exponential bound on $n$ exists even for complex algebraic LRS.

%%%%%%

\begin{lemma}\label{lem: skolem 4, 22}
Suppose $\lrs{u}$ is non-degenerate and is given by (\ref{eq: skolem.rec22}).
There exists a bound $N=\ein{I}$ such that if $u_n=0$, then $n<N$,
where $\len{I}$ is the length of the input 
$\len{A}+\len{B}+\len{C}+\len{D}+\len{\alpha}+\len{\beta}$.
\end{lemma}
\begin{proof}
We wish to solve for $n\in\nat$ the equation:
\begin{equation}\label{eq: skolem 4, 22}
(A+Bn)\alpha^n+(C+Dn)\beta^n=0\mbox{ (where $B,D\neq0$)}.
\end{equation}
Rearrange (\ref{eq: skolem 4, 22}) as 
\begin{equation}\label{eq: skolem 4, 22 rearranged}
\lambda^n=-\frac{(C+Dn)}{(A+Bn)},
\end{equation}
where $\lambda=\alpha/\beta$ is not a root of unity. 
The right-hand side of (\ref{eq: skolem 4, 22
rearranged}) tends to $-D/B$ as $n$ tends to infinity.

If $\lambda$ is an algebraic integer, then by Theorem 
\ref{thm: blanksbymontgomery} (Blanksby and Montgomery), 
it has a Galois conjugate $\sigma(\lambda)$ such that 
\[ |\sigma(\lambda)|>1+\frac{1}{30d^2\log(6d)}, \]
where $d$ is the degree of $\lambda$.
Assume the monomorphism $\sigma$ has been applied to both sides of 
(\ref{eq: skolem 4, 22 rearranged}), so $|\lambda|$ is 
bounded away from $1$ by an inverse polynomial in the size of the input.
By the triangle inequality, if 
\[ n \geq \frac{|BC|+|AD|+|AB|}{|B|^2} \defn N_1=\ein{I}, \]
then
\[ \left|\frac{C+Dn}{A+Bn}\right| \leq \frac{|D|n+|C|}{|B|n-|A|} \leq 
\left|\frac{D}{B}\right|+1. \]
Following the reasoning of Lemma \ref{lem: algebraic number power problem} and
relying on the Blansky and Montgomery bound, we see there exists a bound
$N_2\in\pin{I}$ such that if $n>N_2$, then $|\lambda^n|>|D/B|+1$. Therefore,
for $n>\max\{N_1,N_2\}=\ein{I}$, equation (\ref{eq: skolem 4, 22 rearranged}) 
cannot hold.

Second, suppose $\lambda$ is not an algebraic integer. Then by Lemma \ref{lem:
exists ideal} there exists a prime ideal $P$ in the ring of integers of
$\mathbb{K}=\mathbb{Q}(\alpha,\beta,A,B,C,D)$ such that $v_P(\lambda)\neq0$.
Without loss of generality, we can assume $v_P(\lambda)>0$ (if
$v_P(\lambda)<0$, swap $\alpha$ with $\beta$, $A$ with $C$, and $B$ with $D$).
Applying $v_P$ to (\ref{eq: skolem 4, 22 rearranged}) gives 
\begin{eqnarray*}
v_P(\lambda^n) & = & nv_P(\lambda) \\
& = & v_P\left(-\frac{C+Dn}{A+Bn}\right)  \\
& \leq & \log\left|{\cal N}_{\mathbb{K}/\mathbb{Q}}
\left(-\frac{C+Dn}{A+Bn}\right)\right| \\ 
& \leq &  
[\mathbb{K}:\mathbb{Q}]\log\left|{\cal N}_\mathit{abs}
\left(-\frac{C+Dn}{A+Bn}\right)\right| \\
& = & [\mathbb{K}:\mathbb{Q}]\log
\prod_{i=1}^{[\mathbb{K}:\mathbb{Q}]} 
\left|\frac{\sigma_i(C)+\sigma_i(D)n}{\sigma_i(A)+\sigma_i(B)n}\right|, \\ 
\end{eqnarray*}
where $\sigma_1,\dots,\sigma_{[\mathbb{K}:\mathbb{Q}]}$ are the 
monomorphisms from $\mathbb{K}$ into $\mathbb{C}$. As in the previous case,
if 
\[ 
n>\frac{|\sigma_i(BC)|+|\sigma_i(AD)|
+|\sigma_i(AB)|}{|\sigma_i(B)|^2}
\defn N_i=\ein{I},
\]
then we have
\[
\left|\frac{\sigma_i(C)+\sigma_i(D)n}{\sigma_i(A)+\sigma_i(B)n}\right|
\leq
\left|\frac{\sigma_i(D)}{\sigma_i(B)}\right| + 1
\defn e_i
=
\ein{I}.
\]
It follows therefore that if $n>\max_i\{N_i\}$, we have
\[ v_P\left(-\frac{C+Dn}{A+Bn}\right)
\leq
[\mathbb{K}:\mathbb{Q}]\sum_{i=1}^{[\mathbb{K}:\mathbb{Q}]}\log e_i 
\defn M = 
\pin{I}. \]
Then for $n>\max_i\{N_i\}$ and $n>M$, we have 
\[ v_P(\lambda^n) = nv_P(\lambda) \geq n > M, \]
whereas 
\[ v_P\left(-\frac{C+Dn}{A+Bn}\right) \leq M, \]
so equation (\ref{eq: skolem 4, 22 rearranged}) cannot hold. 
\end{proof}

%%%%%%

\begin{lemma}\label{lem: skolem 4, 211}
Suppose $\lrs{u}$ is non-degenerate and is given by (\ref{eq: skolem.rec211}).
There exists a bound $N=\ein{I}$ such that if $u_n=0$, then $n<N$,
where $\len{I}$ is the length of the input 
$\len{A}+\len{B}+\len{C}+\len{D}+\len{\alpha}+\len{\beta}+\len{\gamma}$.
\end{lemma}
\begin{proof}
We wish to solve for $n\in\nat$ the equation:
\begin{equation}\label{eq: skolem 4, 211}
(A+Bn)\alpha^n+C\beta^n+D\gamma^n=0\mbox{ (where $B,C,D\neq0$)}.
\end{equation}
First suppose $|\alpha|\geq|\beta|,|\gamma|$.  Then the term $(A+Bn)\alpha^n$
is dominant. More precisely, rewrite (\ref{eq: skolem 4, 211}) as 
\[
A+Bn=-C\left(\frac{\beta}{\alpha}\right)^n
-D\left(\frac{\gamma}{\alpha}\right)^n
\]
and observe that if 
\[ n>\frac{|A|+|C|+|D|}{|B|}, \]
then 
\[
|A+Bn|\geq|B|n-|A|>|C|+|D|\geq\left|-C\left(\frac{\beta}{\alpha}\right)^n
-D\left(\frac{\gamma}{\alpha}\right)^n\right|,
\]
so (\ref{eq: skolem 4, 211}) cannot hold due to the strictness of the above
inequality.

Second, suppose that $|\beta|>|\alpha|,|\gamma|$.  Then the term $C\beta^n$ is
dominant. More precisely, rewrite (\ref{eq: skolem 4, 211}) as
\begin{equation}\label{eq: edits.1}
(A+Bn)\left(\frac{\alpha}{\beta}\right)^n
+D\left(\frac{\gamma}{\beta}\right)^n=-C.
\end{equation}
We show that for $n$ sufficiently large, the inequalities
\[ \left|D\left(\frac{\gamma}{\beta}\right)^n\right|<\frac{|C|}{2} \]
and
\[
\left|\left(A+Bn\right)\left(\frac{\alpha}{\beta}\right)^n\right|
<\frac{|C|}{2}
\]
both hold, rendering (\ref{eq: edits.1}) impossible.  The former inequality
holds for $n>\log|C/2D|/\log|\gamma/\beta|$, which is at most exponentially large
in the input. The latter inequality is implied by
\[
\left|(n+1)\left(\frac{\alpha}{\beta}\right)^n\right|
<\frac{|C|}{2M},
\]
where $M=\max\{|A|,|B|\}$. Now let $r=\lceil -\log(2)/\log(\alpha/\beta)\rceil$,
so that 
\[ \left(\frac{\alpha}{\beta}\right)^r\leq\frac{1}{2}, \]
and consider only $n$ of the form $n=kr$ for $k\in\mathbb{Z}^+$.  If 
\[ k>\frac{\log|C/4Mr|}{\log(7/8)} \]
and $k\geq5$, we have 
\[
\left(\frac{\alpha}{\beta}\right)^{kr}k<
\left(\frac{1}{2}\right)^{k}(k+1)<
\left(\frac{7}{8}\right)^{k}<
\frac{|C|}{4Mr},
\]
so 
\[
\left(\frac{\alpha}{\beta}\right)^n(n+1)
\leq\left(\frac{\alpha}{\beta}\right)^n2n
<\frac{|C|}{2M}.
\]
It is clear that $r$ is at most exponentially large in the size of the input,
whereas the bound on $k$ is polynomial. Therefore, the bound on $n$ is
exponential.

Finally, suppose $|\beta|=|\gamma|>|\alpha|$.  Rewrite (\ref{eq: skolem 4,
211}) as
\[
\left(\frac{\beta}{\gamma}\right)^n
=-\frac{D}{C}-\frac{A+Bn}{C}\left(\frac{\alpha}{\gamma}\right)^n.
\]
Then an exponential bound on $n$ follows from Lemma \ref{lem: baker
application}, because the right-hand side is a constant plus an exponentially
decaying term, whereas the left-hand side is on unit circle. 
\end{proof}

%%%%%%

\begin{lemma}\label{lem: skolem 4, 1111, not all same magnitude}
Suppose $\lrs{u}$ is non-degenerate and is given by (\ref{eq: skolem.rec1111}). 
Suppose that
$\alpha,\beta,\gamma,\delta$ do not all have the same magnitude.
There exists a bound $N=\ein{I}$ such that if $u_n=0$, then $n<N$,
where $\len{I}$ is the length of the input 
$\len{A}+\len{B}+\len{C}+\len{D}+\len{\alpha}+\len{\beta}+\len{\gamma}+\len{\delta}$.
\end{lemma}
\begin{proof}
We wish to solve for $n\in\nat$ the equation:
\begin{equation}
\label{eq: skolem 4, 1111}
A\alpha^n+B\beta^n+C\gamma^n+D\delta^n=0\mbox{ (where $A,B,C,D\neq0$)}.
\end{equation}
Let $|\alpha|\geq|\beta|\geq|\gamma|\geq|\delta|$.  First, if
$|\alpha|>|\beta|$, then $A\alpha^n$ is the dominant term in (\ref{eq: skolem
4, 1111}). Rewrite the equation as 
\[
\frac{B}{A}\left(\frac{\beta}{\alpha}\right)^n
+\frac{C}{A}\left(\frac{\gamma}{\alpha}\right)^n
+\frac{D}{A}\left(\frac{\delta}{\alpha}\right)^n=-1
\]
and observe that if 
\[
n>\max\left\{ 
\frac{\log|3B/A|}{\log|\alpha/\beta|},
\frac{\log|3C/A|}{\log|\alpha/\gamma|},
\frac{\log|3D/A|}{\log|\alpha/\delta|}
\right\},
\]
then 
\[
\left|
\frac{B}{A}\left(\frac{\beta}{\alpha}\right)^n
+\frac{C}{A}\left(\frac{\gamma}{\alpha}\right)^n
+\frac{D}{A}\left(\frac{\delta}{\alpha}\right)^n
\right|
<\frac{1}{3}+\frac{1}{3}+\frac{1}{3}=1.
\]
Second, if $|\alpha|=|\beta|>|\gamma|$, then rewrite (\ref{eq: skolem 4, 1111})
as
\begin{equation}\label{eq: skolem 4, 1111 rearranged}
\left(\frac{\beta}{\alpha}\right)^n
=-\frac{A}{B}
-\frac{C}{B}\left(\frac{\gamma}{\alpha}\right)^n
-\frac{D}{B}\left(\frac{\delta}{\alpha}\right)^n.
\end{equation}
The left-hand side of (\ref{eq: skolem 4, 1111 rearranged}) is on the unit
circle, whereas the right is a constant plus exponentially decaying terms. An
exponential bound on $n$ follows from Lemma \ref{lem: baker application}. 

Finally, if $|\alpha|=|\beta|=|\gamma|>|\delta|$, then an exponential bound on
$n$ follows from Lemma \ref{lem: two intersecting circles} applied to equation
(\ref{eq: skolem 4, 1111 rearranged}). 
\end{proof}

%%%%%%%

Thus, the only outstanding problem is to solve $u_n=0$ in the case 
of $\lrs{u}$ given by (\ref{eq: skolem.rec1111}) when
$|\alpha|=|\beta|=|\gamma|=|\delta|$. This case is difficult for general algebraic
$\alpha,\beta,\gamma,\delta$: it is in fact the reason why the  Skolem Problem
is open for LRS of order $4$ over $\alg$. However, for real LRS, the set of
characteristic roots is closed under complex conjugation, so complex
roots come in conjugate pairs. 

Another simplifying observation which is helpful for this last outstanding case is
that for any LRS $\lrs{u}$ over $\alg$, one can find another LRS
$\lrs{v}$ over $\mathcal{O}_\alg$ such that $u_n=0$ if and only if $v_n=0$.
Indeed, recall that for any algebraic number $\alpha$, it is possible to find an algebraic integer
$\beta$ and a rational integer $M$ such that $\alpha = \beta/M$: it is sufficient to choose $M$
to be the least common multiple of all denominators of the coefficients of the minimal
polynomial of $\alpha$. 
Then suppose the sequence $\lrs{u}$ has initial
terms $u_0,\dots,u_{d-1}\in\alg$ and satisfies a recurrence equation
$u_{n} = \sum_{j=0}^{d-1} a_j u_{n-j-1}$ with $a_0,\dots,a_{d-1}\in\alg$.
Let $M\in\zed$ be chosen so that $Ma_j\in\mathcal{O}_\alg$ and $Mu_j\in\mathcal{O}_\alg$
for $j=0,\dots,d-1$. Then it is easy to see that the sequence $\lrs{v}$ defined by 
$v_n = M^{n+1}u_n$ has the same zero set as $\lrs{u}$, has algebraic integer initial terms 
and satisfies a linear recurrence relation of order $d$ with algebraic integer coefficients.
Since $M$ can be written down using only polynomial space, this reduction to the 
integer case can be carried out in polynomial time. Therefore, by the integral closure 
of $\mathcal{O}_\alg$, we can assume the characteristic roots $\alpha,\beta,\gamma,\delta$ 
are algebraic integers.

With these two observations in place, we proceed to the final technical result
concerning the  Skolem Problem for LRS of order $4$ over $\ra$:

\begin{lemma}\label{lem: skolem 4, 1111, same magnitude}
Suppose $\lrs{u}$ is non-degenerate and is given by (\ref{eq: skolem.rec1111}). 
Suppose that $\alpha,\beta,\gamma,\delta$ are algebraic integers with
$|\alpha|=|\beta|=|\gamma|=|\delta|$. Suppose also 
$\{\alpha,\beta,\gamma,\delta\}$ is closed under complex conjugation.
There exists a bound $N=\ein{I}$ such that if $u_n=0$,
then $n<N$, where $\len{I}$ is the length of the input
$\len{A}+\len{B}+\len{C}+\len{D}+\len{\alpha}+\len{\beta}+\len{\gamma}+\len{\delta}$.
\end{lemma}
\begin{proof}
Let $\mathbb{K}=\mathbb{Q}(\alpha,\beta,\gamma,\delta,A,B,C,D)$. We have to solve
for $n\in\nat$ the equation:
\begin{equation}
\label{eq: skolem 4, 1111second}
A\alpha^n+B\beta^n+C\gamma^n+D\delta^n=0\mbox{ (where $A,B,C,D\neq0$)}.
\end{equation}
The closure of $\{\alpha,\beta,\gamma,\delta\}$ under complex conjugation,
the equality $|\alpha|=|\beta|=|\gamma|=|\delta|$ and the non-degeneracy 
of the LRS imply that the characteristic roots are two pairs of complex 
conjugates, so assume without loss of generality that $\beta=\overline{\alpha}$ 
and $\gamma=\overline{\delta}$. If
$\alpha/\beta$ is an algebraic integer, then since it is not a root of unity,
there exists a monomorphism $\sigma$ from $\mathbb{K}$ to $\mathbb{C}$ such
that $|\sigma(\alpha)|\neq|\sigma(\beta)|$.  Applying $\sigma$ to (\ref{eq:
skolem 4, 1111second}) leads to a Skolem instance of order 4 with roots whose
magnitudes are not all the same.  A bound on $n$ follows from Lemma \ref{lem:
skolem 4, 1111, not all same magnitude}.

Suppose then that $\alpha/\beta$ is not an algebraic integer. By the reasoning
of Lemma \ref{lem: exists ideal}, there exists a prime ideal $P$ in
$\mathcal{O}_\mathbb{K}$ such that $v_P(\alpha)\neq v_P(\beta)$ and at least
one of $v_P(\alpha)$ and $v_P(\beta)$ is strictly positive. Assume without loss
of generality that 
\[ v_P(\alpha)>v_P(\beta)\geq0. \]
Since $\alpha\beta=\gamma\delta=|\alpha|^2$, we have
\[ v_P(\alpha)+v_P(\beta)=v_P(\gamma)+v_P(\delta). \]
Therefore, at most two of the roots are smallest under the valuation $v_P$. 

If one root, say $\beta$, is strictly smaller under $v_P$ than the rest, then
rewrite (\ref{eq: skolem 4, 1111second}) as 
\begin{equation}\label{eq: final, case 1}
A\alpha^n+B\beta^n=-C\gamma^n-D\delta^n
\end{equation}
Since $v_P(\beta)<v_P(\alpha)$, for $n>v_P(A/B)/v_P(\beta/\alpha)$ we have 
\[ v_P(A\alpha^n+B\beta^n)=v_P(B)+nv_P(\beta), \]
whereas 
\[ v_P(-C\gamma^n-D\delta^n)\geq v_P(C)+nv_P(\gamma). \]
Therefore, for $n>v_P(B/C)/v_P(\gamma/\beta)$, we have that the left-hand side
of (\ref{eq: final, case 1}) is strictly smaller under $v_P$ than the
right-hand side, so (\ref{eq: skolem 4, 1111second}) cannot hold. This bound on $n$
is polynomial in the input size.

Now suppose that there are two roots with strictly smallest valuation with
respect to $v_P$:
\[ 0\leq v_P(\beta)=v_P(\gamma)<v_P(\alpha)=v_P(\delta). \]
Then rewrite (\ref{eq: skolem 4, 1111second}) as
\begin{equation}\label{eq: final, case 2}
B\beta^n\left(\left(-\frac{C}{B}\right)
\left(\frac{\gamma}{\beta}\right)^n-1\right)
=A\alpha^n+D\delta^n.
\end{equation}
Since $\gamma/\beta$ is not a root of unity, the term
$(-C/B)(\gamma/\beta)^n-1$ can be zero for at most one value of $n$. This value
is at most polynomially large in the input size (by Lemma \ref{lem: algebraic
number power problem}).  For all other $n$, we use Theorem \ref{thm:
vdPoorten} to this term. Let $p$ be the unique prime rational integer in the
ideal $P$, and let $d=[\mathbb{K}:\mathbb{Q}]$. Let $H$ be an upper bound for
the heights of $-C/B$ and $\gamma/\beta$. Then by Theorem \ref{thm: vdPoorten}
(van der Poorten), we have
\begin{equation}\label{eq: skolem.vdpapp}
v_P\left(\left(-\frac{C}{B}\right)
\left(\frac{\gamma}{\beta}\right)^n-1\right)
\leq
(48d)^{36}\frac{p^d}{\log p}
(\log H)^2(\log n)^2.
\end{equation}
It is classical that ${\cal N}(P)=p^f$ for some positive integer $f$, so ${\cal
N}(P)\geq p$. Moreover, since $\alpha$ is an algebraic integer, all prime
ideals $P_1,\dots,P_s$ in the factorisation of $[\alpha]$ appear with positive
exponents $k_1,\dots,k_s$:
\[ [\alpha]=P_1^{k_1}\dots P_s^{k_s}. \]
Since ${\cal N}(P_i)\geq2$ for all $P_i$, we have
\[
|{\cal N}_{\mathbb{K}/\mathbb{Q}}(\alpha)|
={\cal N}([\alpha])\geq{\cal N}(P)\geq p.
\]
Therefore, $p$ is at most exponentially large in the input size.  Then we can
write (\ref{eq: skolem.vdpapp}) as
\[
v_P\left(\left(-\frac{C}{B}\right)\left(\frac{\gamma}{\beta}\right)^n-1\right)
\leq E_1(\log n)^2,
\]
where $E_1$ is exponentially large in the input size and independent of $n$.
Now we apply $v_P$ to both sides of equation (\ref{eq: final, case 2}):
\[ v_P(\mathit{LHS})\leq v_P(B)+nv_P(\beta)+E_1(\log n)^2 \]
and
\[ v_P(\mathit{RHS})\geq v_P(A)+nv_P(\alpha). \]
Equation (\ref{eq: skolem 4, 1111second}) cannot hold if
\[ v_P(B)+nv_P(\beta)+E_1(\log n)^2<v_P(A)+nv_P(\alpha), \]
which is implied by
\[ v_P(B/A)+E_1(\log n)^2<n, \]
since $v_P(\alpha)>v_P(\beta)$. Let $E_2=\max\{v_P(B/A),E_1\}$,
then this is implied by
\[ E_2((\log n)^2+1)<n. \]
Since 
\[ (\log n)^2+1<\frac{5\sqrt{n}}{2} \]
for all $n\geq1$, it suffices to have
\[ n>\left(\frac{5}{2}E_2\right)^2. \]
This bound on $n$ is exponential in the size of the input.
\end{proof}

\medskip

\bibliographystyle{ACM-Reference-Format-Journals}
\bibliography{thesis}

\end{document}